\newtheoremstyle{break}
  {\topsep}{\topsep}%
  {\normalfont}{}%
  {\itshape}{.}%
  { }{\thmname{#1}\thmnumber{ #2}}%
\theoremstyle{break}
\newtheorem{proposition}{Proposition}
\newtheorem{lemma}{Lemma}
\newcommand{\pt}{\partial}
\def\A{\mathrm {A}}
\def\B{\mathrm {B}}
\def\C{\mathrm {C}}
\def\X{\mathrm {X}}
\def\eps{\varepsilon}
\begin{document}

\preprint{preprint number}

\title{Partial synchronization and community switching in phase-oscillator networks and its analysis based on a bidirectional, weighted chain of three oscillators}

\author{Masaki Kato}
\affiliation{%
The Department of Mathematical Informatics,
The University of Tokyo, Tokyo, Japan.
}%
\author{Hiroshi Kori}%
\affiliation{%
The Department of Mathematical Informatics,
The University of Tokyo, Tokyo, Japan.
}%
\affiliation{Department of Complexity Sciences and Engineering,
The University of Tokyo, Kashiwa, Chiba, Japan.}

\date{\today}

\begin{abstract}
Complex networks often possess communities defined based on network connectivity. 
When dynamics undergo in a network, one can also consider dynamical communities; i.e., a group of nodes displaying a similar dynamical process.
We have investigated both analytically and numerically the development of dynamical community structure, where the community is referred to as a group of nodes synchronized in frequency, in networks of phase oscillators.
We first demonstrate that using a few example networks, the community structure changes when network connectivity or interaction strength is varied.
In particular, we found that community switching, i.e., a portion of oscillators change the group to which they synchronize, occurs for a range of parameters.
We then propose a three-oscillator model: a bidirectional, weighted chain of three Kuramoto phase oscillators, as a theoretical framework for understanding the community formation and its variation.
Our analysis demonstrates that the model shows a variety of partially synchronized patterns: oscillators with similar natural frequencies tend to synchronize for weak coupling, while tightly connected oscillators tend to synchronize for strong coupling.
We obtain approximate expressions for the critical coupling strengths by employing a perturbative approach in a weak coupling regime and a geometric approach in a strong coupling regimes.
Moreover, we elucidate the bifurcation types of transitions between different patterns. 
Our theory might be useful for understanding the development of partially synchronized patterns in a wider class of complex networks than community structured networks. 
  
\end{abstract}


\maketitle


\section{Introduction}
\label{sec:Introduction}

For decades now, complex networks and their dynamical processes are attracting much attention because of their broad applicability, which is ranges from biology to engineering \cite{RevModPhys.74.47,BOCCALETTI2006175}.
Complex networks often possess community (or module) structure, that is, there are groups of nodes that interact more intensely inside each group than between different groups \cite{girvan2002}.
In biological, social, and technological networks, the presence of a community takes an important role in their function.
For example, in the worldwide web, community may correspond to a set of webpages on related topics \cite{Flake2002};
in protein-protein interaction networks, it may correspond to a functional group involved in a particular process, such as cancer \cite{Taylor2009}.
Another notion of community is possible when dynamics are considered, that is, a group of nodes that share similar dynamical processes.
Dynamical community can be found in many systems, including in decision-making\cite{Couzin2005}, voter\cite{liggett1999stochastic,PhysRevE.80.041129}, and animal group consensus models\cite{Nabet2009,science.1210280,pnas.1118318108}.

A large body of dynamical community examples can be found in oscillator networks \cite{Louzada2012,PhysRevLett.106.054102,PhysRevE.84.046202,Nabet2009,pnas.1118318108,PhysRevE.70.056125,10.1143/ptp/86.6.1159,PhysRevLett.90.014101} 
For example, a group of oscillators with the same effective frequency (i.e., frequency-locked oscillators) can be regarded as a dynamical community\cite{PhysRevE.91.022901,Motter2013}.  
Such a dynamical community may emerge as a self-organization process, as experimentally demonstrated using interacting chemical oscillators \cite{Mikhailovpnas2004}.
In general, the formation of a community is expected to be facilitated in a group of oscillators with both intense connection and similar natural frequencies.
However, a group of oscillators does not necessarily share both properties. In this case, it is nontrivial which property dominates in the formation of a dynamical community.
Recently, it was numerically shown that in oscillators in a unidirectional chain, the community configuration depends on the overall coupling strength.
In particular, oscillators may go back and forth the community to which it belongs as the overall coupling strength increases \cite{XiaHuang:130506}. 
Here, we refer to this phenomenon as community switching. 
Although this work is of great interest because it demonstrates the plasticity of a community structure in oscillator networks, the mechanism and the conditions of the formation and the switching of a community remain obscure because their theoretical analysis is based on a very simple setup (i.e., a chain of three oscillators coupled uni-directionally).

In the present paper, we first numerically demonstrate that community switching occurs in oscillator networks when the network connectivity or the overall coupling strength is varied.
We then propose a theoretical framework for understanding community switching. 
More concretely, we propose and analyze a three-oscillator model comprising three oscillators connected in a bi-directional chain with arbitrary weights.
We obtain the expressions for the critical coupling strength in terms of the network parameters, natural frequencies, and overall coupling by partly exploiting a perturbative and geometric approaches. 
We also elucidate the bifurcations that occur when the community structure changes. 
Our analysis sheds light on the interplay between network connectivity and intrinsic dynamical properties in the community formation. 

The remainder of this paper is organized as follows: 
In Sec.~\ref{sec:method-3body-settting}, we conduct numerical simulations of particular oscillator networks and show that various community structures arise when the network connectivity or the overall coupling strength is manipulated.
In Sec.~\ref{sec:model-analysis}, we propose a three-oscillator model and analyze its basic properties. 
In Sec.~\ref{sec:3body-critical}, we present the bifurcation analysis of the three-oscillator model.
In particular, we conduct a higher-order perturbation analysis where the overall coupling strength and one of the frequency differences are treated as small parameters.
Moreover, we implement a geometric approach to prove the existence of periodic orbits corresponding to partial synchronizations.
In Sec.\ref{sec:applicationEg}, we apply our three-oscillator model to a particular oscillator network and show that our three-oscillator model qualitatively captures the behavior of a complex network.
Finally, in Sec.\ref{sec:concludediscuss}, we conclude and discuss our results.

\section{Synchronization and community switching in oscillator networks}
\label{sec:method-3body-settting}
We consider a network composed of $N$ Kuramoto phase oscillators \cite{Kuramoto1984}.
The model is given by a set of evolution equations for the oscillator phase $\phi_i$ ($i=1,2,\ldots,N$): 
\begin{equation}
    \dot{\phi}_i=\omega_i+\lambda\sum_{j=1}^N A_{ij}\sin(\phi_j-\phi_i),
\label{full_model}
\end{equation}
where $\omega_i$ is the natural frequency of the oscillator $i$, and $\lambda$ is the coupling strength.
The network connection topology is determined by an adjacency matrix $A$, 
where its element $A_{ij}$ is $1$ if oscillator $i$ is connected to oscillator $j$, and $0$ otherwise. 
In this article, we consider undirected networks only; thus, the matrix $A$ is symmetric.
We observe the effective frequency of the oscillator $i$, defined as follows:
\begin{equation}
 \Omega_i^{\mathrm{eff}}\coloneqq \frac{\phi_i(T)-\phi_i(T_0)}{T-T_0},
  \label{eff}
\end{equation}
where $T$ and $T_0$ are sufficiently large constants.
When the effective frequencies of oscillators $i$ and $j$ converge to a same value, i.e., $\Omega_i^\mathrm{eff}-\Omega_j^\mathrm{eff} \to 0$ as $T\to\infty$, these oscillators are regarded as being synchronized and thus in the same dynamical community.
We employ $T=3000$ and $T_0=1500$ in numerical simulations.

In this section, we numerically demonstrate the community switching in particular networks.
In the demonstration, we assume that $\omega_i$ takes only two values.
With this choice, we have two trivial communities in the absence of an interaction;
thus we can highlight the effect of the interaction on the community formation.
We investigate two particular cases: (i) a randomly evolving network with fixed $\lambda$; and (ii) various $\lambda$ values with a fixed network structure.
Their detailed analysis is performed in Sec.~\ref{sec:applicationEg} based on the theory presented in Sec.~\ref{sec:model-analysis}. 

\emph{Case (i)}.
We consider a complete network of $6$ oscillators, where the total number of edges is $l=15$.
We cut one randomly chosen edge in each step until all edges are removed.
Figure~\ref{fig:cutting_N6K10} shows the numerical simulation results.
Here, oscillator $1$ undergoes community switching between $l=6$ and $l=7$: whereas oscillator $1$ synchronizes with oscillators $0$ at $l=6$, it synchronizes with oscillator $2,3,4$ and $5$ at $l=7$.

\emph{Case (ii)}.
Figure~\ref{fig:changingK_N6} shows the numerical simulation results for a network composed of $6$ oscillators with the adjacency matrix given in Eq.~\eqref{eq:adjmat}.
In this example network, oscillator $1$ undergoes community switching at around $\lambda=0.6$.

Note that we do not observe hysteresis in both cases, suggesting only one stable state at each parameter set.

\begin{figure}[tbp]
  \begin{minipage}[t]{0.45\linewidth}
    \centering
    \includegraphics[height=50mm]{./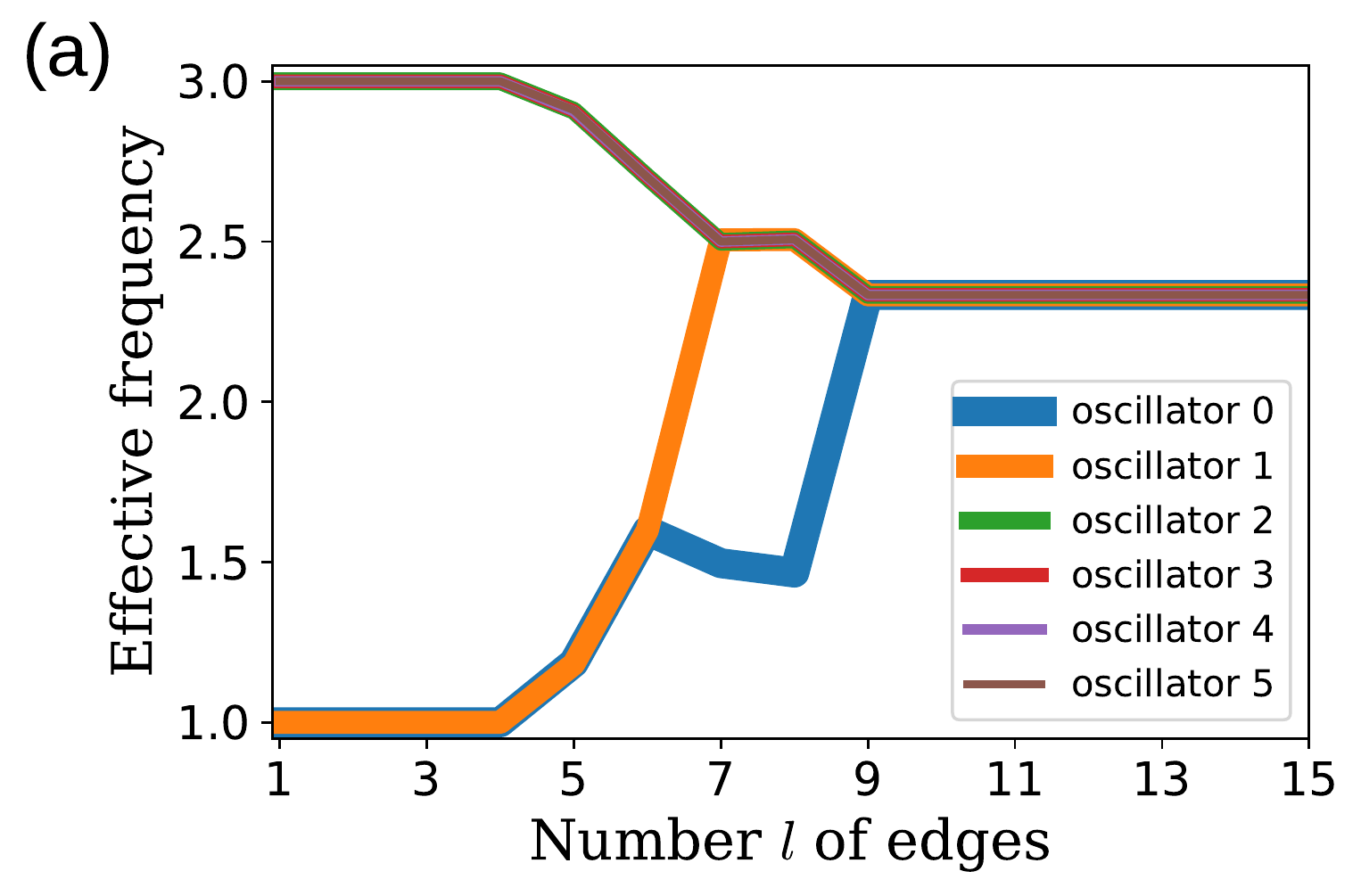}
    \refstepcounter{figure}
    \refstepcounter{subfigure}
    \label{fig:cutting_N6K10}
  \end{minipage}
  \begin{minipage}[t]{0.45\linewidth}
    \centering
    \includegraphics[height=50mm]{./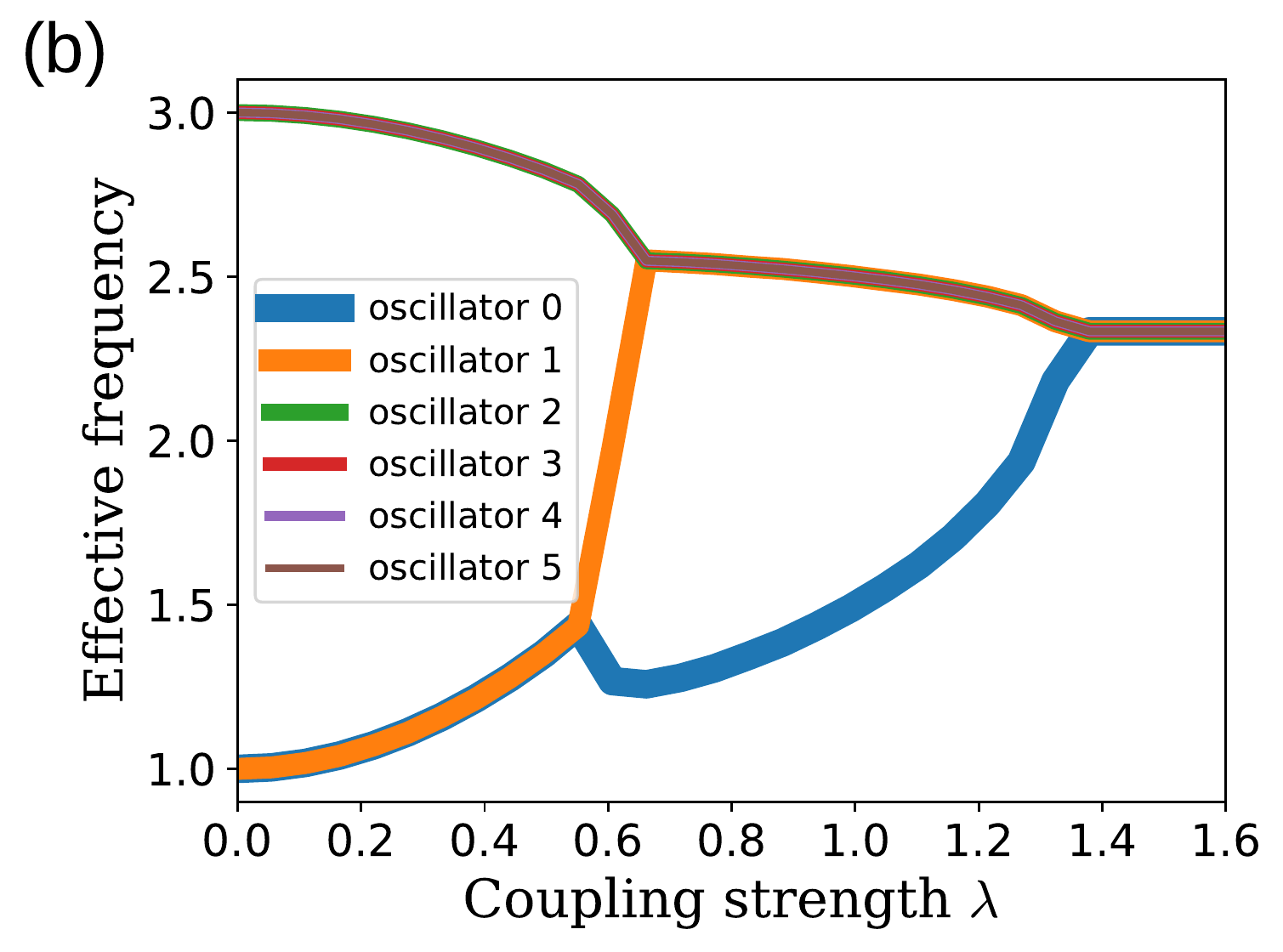}
    \refstepcounter{subfigure}
    \label{fig:changingK_N6}
  \end{minipage}
  \addtocounter{figure}{-1}
  \caption{   
  Community switching demonstrated in the full model, given in Eq.~\eqref{full_model}
  with $N=6$. We set $\omega_{0,1}=1$ and $\omega_{2,3,4,5}=3$.
  (a) A randomly evolving network comprised of $6$ oscillators with fixed $\lambda=1$.
  Community switching occurs between $l=6$ and $l=7$.
  (b) Various coupling strength $\lambda$ for a fixed adjacency matrix given in Eq.~\ref{eq:adjmat}.
  Community switching occurs at around $\lambda=0.6$.
  }
  \label{fig:example}
\end{figure}

\section{Three-oscillator model}
\label{sec:model-analysis}

\subsection{Model and its derivation}
\label{eq:model-dev}
As a tractable model useful for understanding and analyzing community switching,
we propose the following three-oscillator model:
\begin{subequations}
  \label{eq:3body}
  \begin{align}
    \dot{\phi}_\A&=\omega_\A+a r K \sin\left( \phi_\B-\phi_\A \right), \\
    \dot{\phi}_\B&=\omega_\B+a K\sin\left( \phi_\A-\phi_\B \right) + K\sin\left( \phi_\C-\phi_\B \right),\\
    \dot{\phi}_\C&=\omega_\C+ s K \sin\left( \phi_\B-\phi_\C \right),
  \end{align}
\end{subequations}
where $\phi_\X$ and $\omega_\X$ ($\X = \A, \B, \C$) are the phase and the intrinsic frequency of oscillator $\X$, respectively;
$K$ is the overall coupling strength; and $a, r,$ and $s$ are the parameters. 
Figure.~\ref{fig:model}(a) depicts the model schematics.
This model can reproduce the community switching of oscillator B induced by a variation in parameter values.

\begin{figure}[tbp]
   \centering
   \includegraphics[width=80mm]{./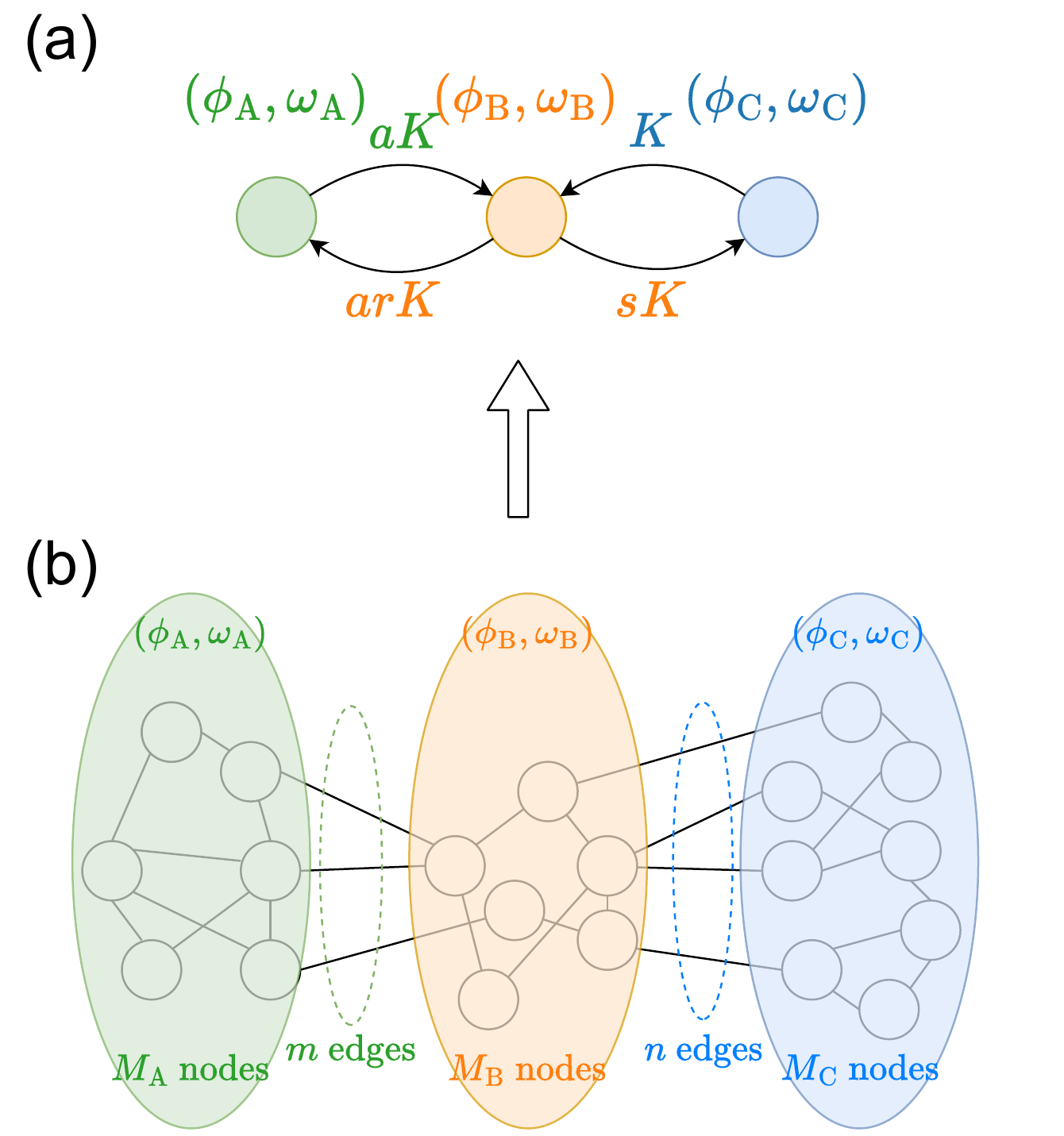}\\
   \caption{
    Model schematics. (a) Three-oscillator model given in Eq.~\eqref{eq:3body}. 
    (b) Three-group network: our three-oscillator model can be regarded as a mean-field approximation of the three-group network. 
    The correspondence between the parameters are given as $K= \frac{n\lambda}{M_\B}, a=\frac{m}{n}, r=\frac{M_\A}{M_\B}$, and $s=\frac{M_\B}{M_\C}$.
   }
  \label{fig:model}
\end{figure}

This model can be derived as a mean-field approximation of a particular type of oscillator network, the schematics of which is shown in Fig.~\ref{fig:model}(b).
We assume that the network comprises three groups $\A$, $\B$, and $\C$, and the oscillators are much more strongly connected inside each group than between groups.
Therefore, the phases of the oscillators in each group are assumed to be nearly the same.
Now we consider the following mean-field approximation: the properties of the oscillators inside each group are averaged and the entire dynamics are described as follows:
\begin{subequations}
  \label{eq:3body-default}
  \begin{align}
    \dot{\phi}_\A&=\omega_\A+\frac{m}{M_\A}\lambda\sin\left( \phi_\B-\phi_\A \right), \\
    \dot{\phi}_\B&=\omega_\B+\frac{n}{M_\B}\lambda\sin\left( \phi_\C-\phi_\B \right)+\frac{m}{M_\B}\lambda\sin\left( \phi_\A-\phi_\B \right),\\
    \dot{\phi}_\C&=\omega_\C+\frac{n}{M_\C}\lambda\sin\left( \phi_\B-\phi_\C \right),
  \end{align}
\end{subequations}
where $\phi_\X$, $\omega_\X$, $M_\X$ ($\X=\A,\B,\C$) is the averaged phase, averaged frequency, and number of oscillators in each group, respectively; $\lambda$ is the original coupling strength; and $m(n)$ is the number of edges between $\B$ and $\A(\C)$.
Introducing the overall coupling strength $K= \frac{n\lambda}{M_\B}$ and the ratios $a=\frac{m}{n}, r=\frac{M_\B}{M_\A}$, and $s=\frac{M_\B}{M_\C}$, we obtain Eq.~\eqref{eq:3body}.
A variation in $a$ in the three-oscillator model is associated with a change in the network connectivity in the original network.
The application of the three-oscillator model to the example networks presented in Fig.~\ref{fig:3body-application} is presented in Sec.~\ref{sec:applicationEg}

Introducing phase differences $x\coloneqq \phi_\A-\phi_\B$ and $y\coloneqq \phi_\C-\phi_\B$, our three-oscillator model is further reduced to 
\begin{subequations}
  \label{eq:phase-diff}
  \begin{align}
    \dot{x}&=\nu_x-K \left\{ a \left( 1+r \right) \sin x+  \sin y \right\}, \label{eq:phase-diff-x}\\
    \dot{y}&=\nu_y-K \left\{ a \sin x+ \left( 1 + s \right) \sin y \right\}, \label{eq:phase-diff-y}
  \end{align}
\end{subequations}
where $\nu_x=\omega_\A-\omega_\B$ and $\nu_y=\omega_\C-\omega_B$. As far as we are concerned with the community structure, it is sufficient to analyze this two-dimensional system.
Without loss of generality, we assume $|\nu_x| \ge |\nu_y|$ and $\nu_x>0$.
The former is trivial because one may exchange the name of oscillators A and C when
$|\nu_x| < |\nu_y|$.
The latter comes from the fact that Eq.~\eqref{eq:phase-diff} is invariant
under the transformation $(x,y,\nu_x,\nu_y) \to (-x,-y,-\nu_x,-\nu_y)$.

\subsection{Synchronization patterns and transitions}
\label{sec:3body-synchro-state}
We have at most five synchronization patterns for the three oscillators: 
$\{\A,\B,\C\}$, $\{\A\B,\C \}$, $\{ \A\C,\B\}$, $\{\A,\B\C\}$, and $\{\A\B\C\}$, where each element represents the group of synchronized oscillators.
However, in our three-oscillator model \eqref{eq:3body}, it is expected that pattern $\{ \A\C,\B\}$ does not generically arise because oscillators A and C are connected via oscillator B.

Figure.~\ref{fig:3body-abs} illustrates some typical transition behaviors among the synchronization patterns for a variation in $K$, while the other parameters are fixed. 
The example shown in Fig.~\ref{fig:3body-notapprox175} contains all the four possible patterns.
We observe transitions $\{\A,\B,\C\}$ $\to \{\A, \B\C \}$ $\to \{\A,\B,\C\}$ $\to \{ \A\C,\B\}$, $\to \{\A\B\C\}$ as $K$ increases.
Thus four critical $K$ values characterize each transition.
As shown in Fig.~\ref{fig:3body-notapprox150} and \ref{fig:3body-notapprox200}, some transitions disappear for other $a$ values.
These transition behaviors are robust against small changes in the parameter values.
Qualitatively the same transition behaviors are observed, even when either $\nu_x$ or $\nu_y$ is very small.
Figure.~\ref{fig:3body-typical} presents an example, where we observe the same transition behavior as that in Fig.~\ref{fig:3body-notapprox175}.
Non-generic transitions (i.e., the transition between $\{\A,\B,\C\}$ and $\{\A\B\C\}$ at a critical $K$) may also arise for a special parameter set, such as $\nu_x = -\nu_y$ and $r=s=1$.
Note that in Fig. \ref{fig:3body-abs}, we tested the two cases of increasing and decreasing $K$ without resetting the $x$ and $y$ values and found no hysteresis.
Note also that in Figs.~\ref{fig:3body-notapprox175}, \ref{fig:3body-notapprox200}, and \ref{fig:3body-typical}, the effective frequencies non-monotonically vary as $K$ increases. 
As detailed in Appendix.\ref{appendix:mn-sync}, this non-monotonic behavior stems from $m:n$ synchronization between $x$ and $y$. 

Henceforth, we focus only on the generic transitions and analyze them.
We denote the four critical coupling strengths observed in Fig.~\ref{fig:3body-notapprox175} or (d) as $K_1$, $K_2$, $K_3$ and $K_4$ in ascending order.

\begin{figure}[tbp]
    \begin{minipage}[b]{0.47\linewidth}
        \centering
        \includegraphics[keepaspectratio, scale=0.455]{./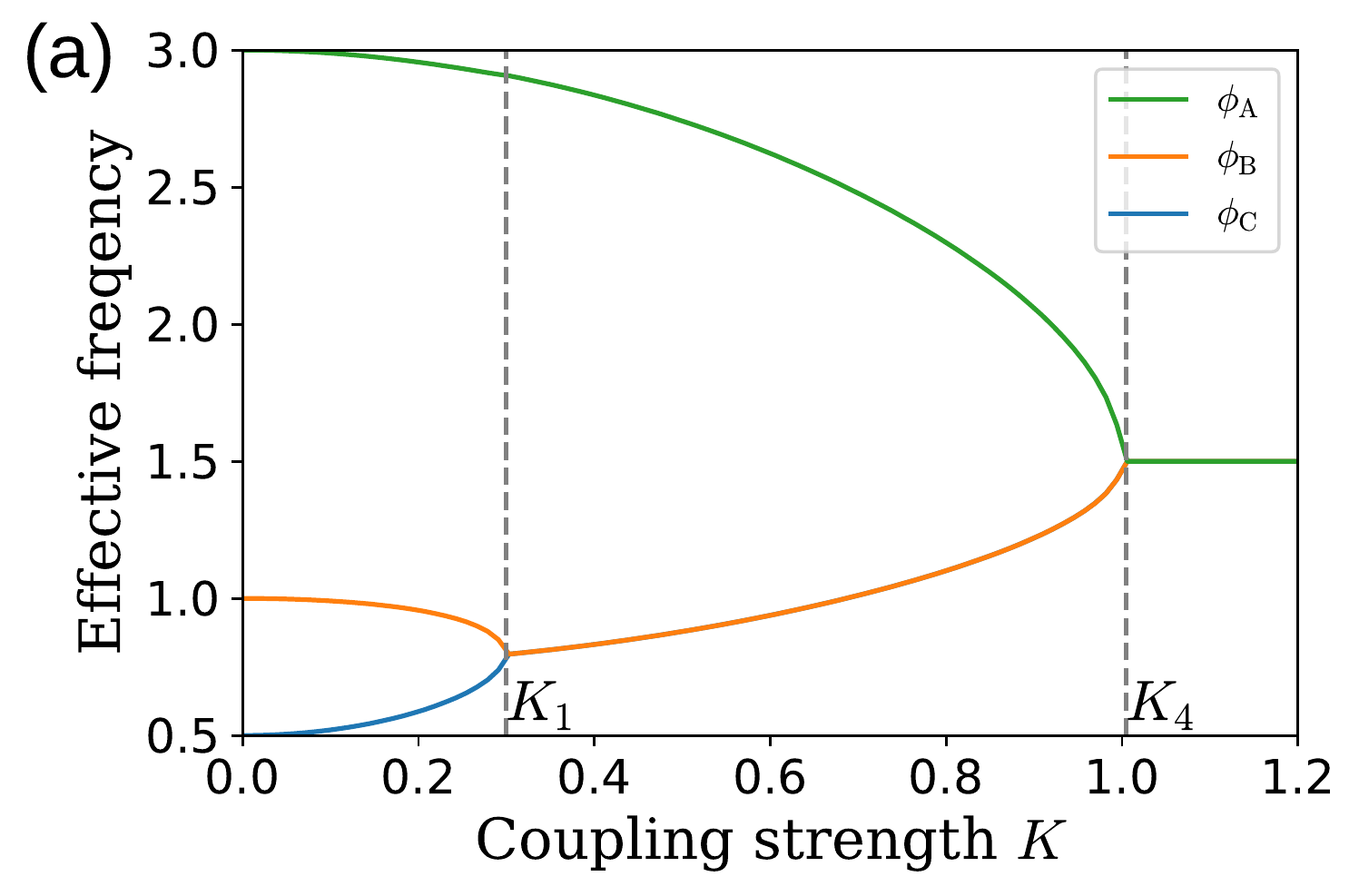}
        \refstepcounter{subfigure}
        \label{fig:3body-notapprox150}
    \end{minipage}
    \begin{minipage}[b]{0.47\linewidth}
        \centering
        \includegraphics[keepaspectratio, scale=0.455]{./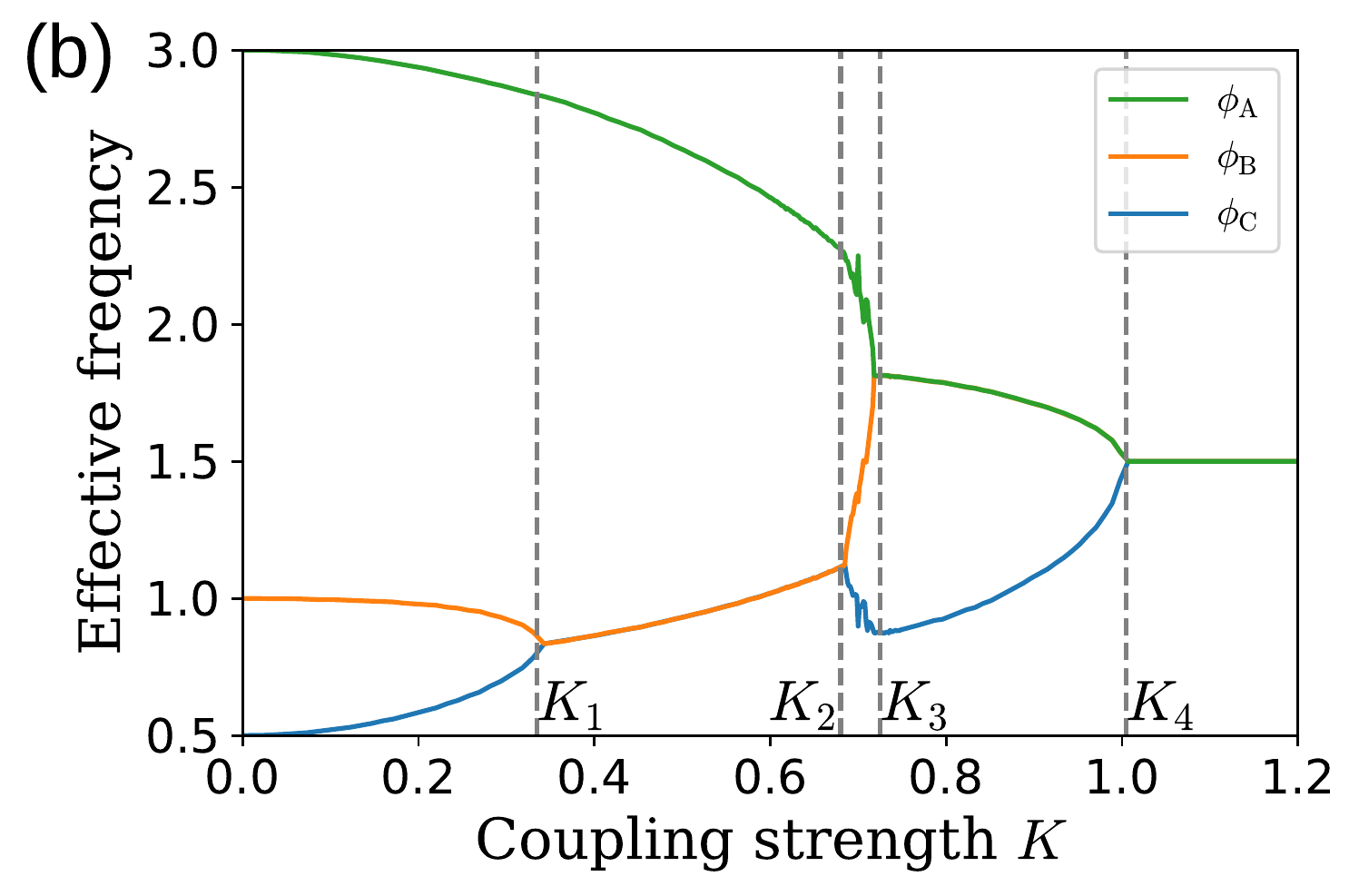}
        \refstepcounter{subfigure}
        \label{fig:3body-notapprox175}
    \end{minipage}\\
    \begin{minipage}[b]{0.47\linewidth}
        \centering
        \includegraphics[keepaspectratio, scale=0.455]{./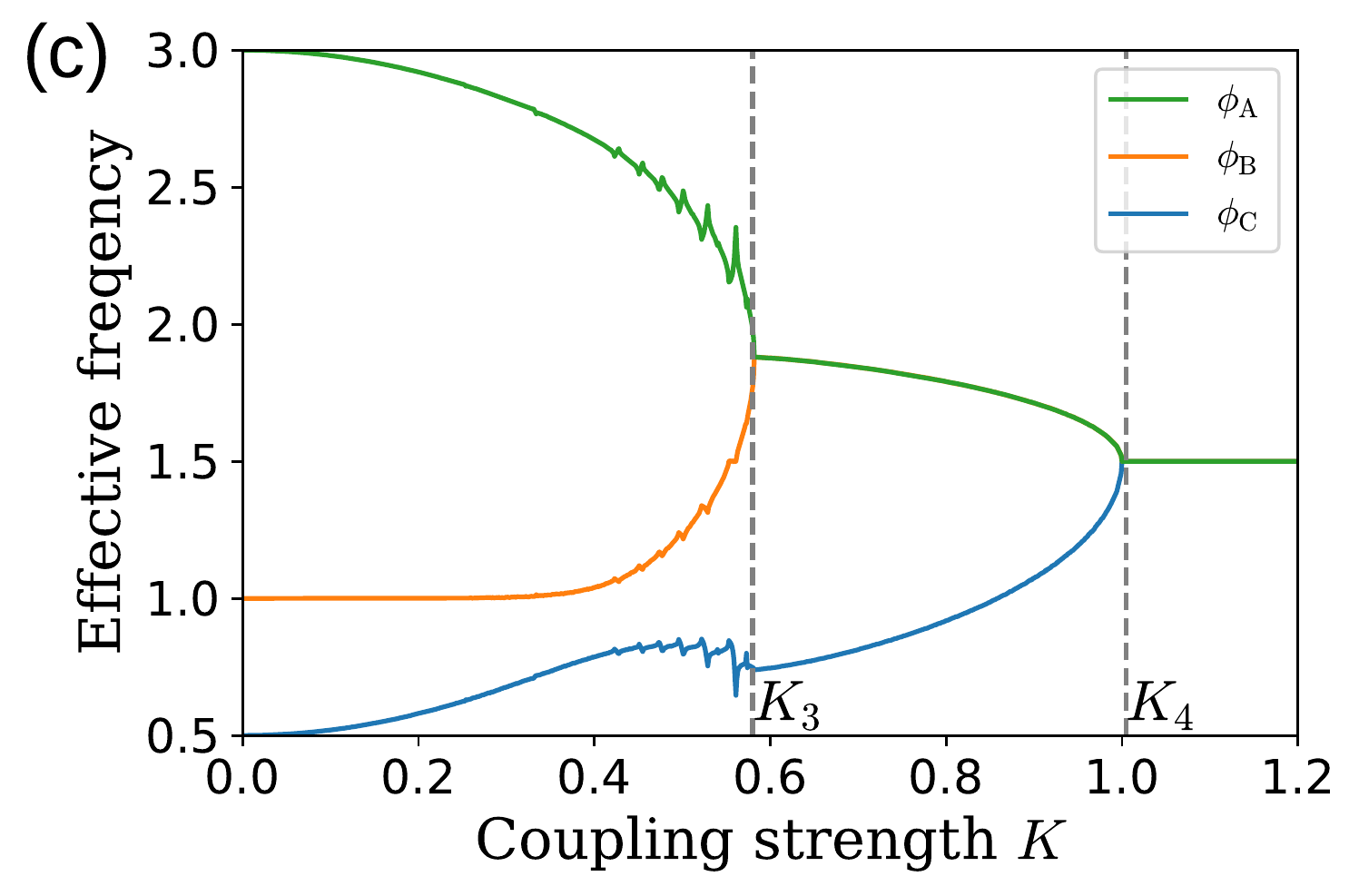}
        \refstepcounter{subfigure}
        \label{fig:3body-notapprox200}
    \end{minipage}
    \begin{minipage}[b]{0.47\linewidth}
        \centering
        \includegraphics[keepaspectratio, scale=0.455]{./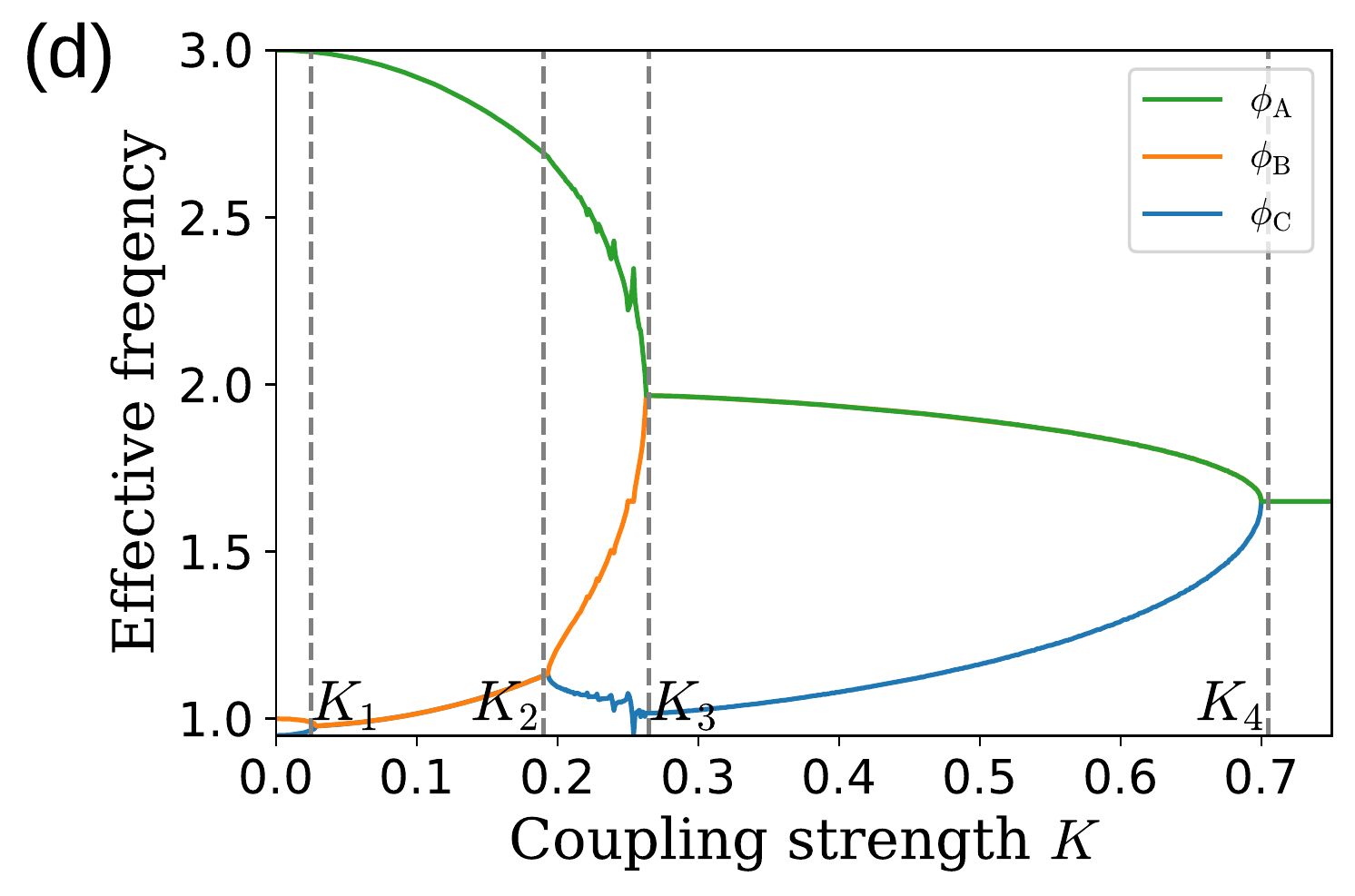}
        \refstepcounter{subfigure}
        \label{fig:3body-typical}
    \end{minipage}
    \caption{
    Synchronization patterns and transition behavior observed in the three-oscillator model given by Eq.~\eqref{eq:3body} with fixed $r=s=1$.
    (a-c) We fix $\omega_\A=3.0$, $\omega_\B=1.0$, and $\omega_\C=0.5$, whereas $a$ assumes different values: (a) $a=1.5$, (b) $a=1.75$, and (c) $a=2.0$. 
    (d) $\omega_\A=3.0$, $\omega_\B=1.0$, $\omega_\C=0.95$, and $a=4.0$.
    }
    \label{fig:3body-abs}
\end{figure}

\subsection{Phase plane analysis}
\label{sec:3body-phase-plane}
Each panel in Fig.~\ref{fig:phase} displays the vector field, $x$-nullcline, $y$-nullcline, sink, and trajectory out of $(x,y)=(0,0)$ of Eq.~\eqref{eq:phase-diff} for fixed $a=4.0$ and different $K$ values.
The figure caption describes the correspondence between the panel and the synchronization pattern.

Transitions (a)$\to$(b) and (c)$\to$(d) are characterized by the appearance of a stable limit cycle.
Similarly, (b)$\to$(c) is characterized by the disappearance of the stable limit cycles. 
These transitions are likely to be characterized by the saddle-node bifurcation of limit cycles.
In contrast, transition (d)$\to$(e) is characterized by the disappearance of the limit cycle and the appearance of a stable node, suggesting the saddle-node bifurcation on an invariant cycle.
These observations will be verified in Sec.~\ref{sec:3body-critical}.

\begin{figure}[tbp]
    \begin{minipage}[b]{0.47\linewidth}
        \centering
        \includegraphics[keepaspectratio, scale=0.45]{./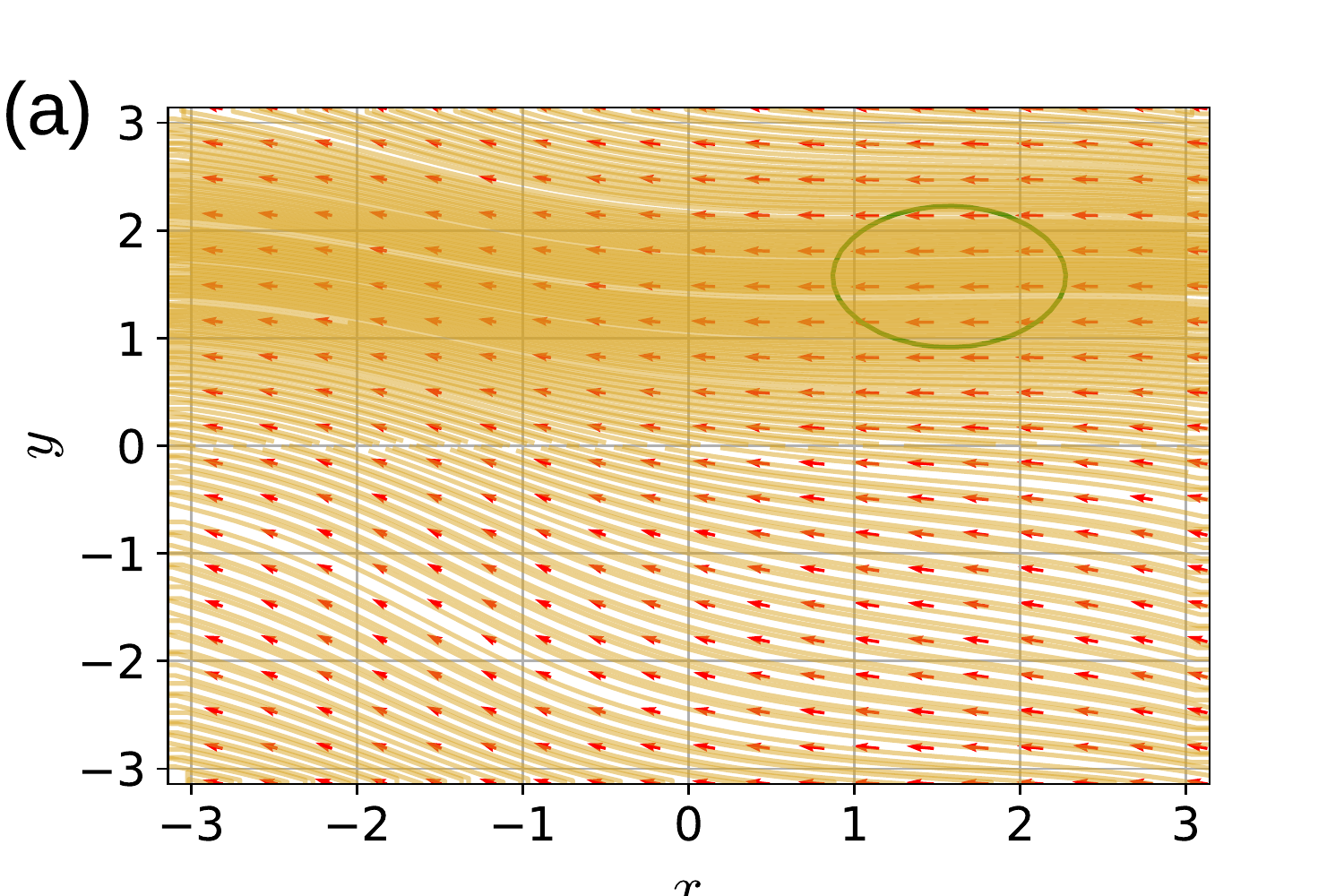}
        \refstepcounter{subfigure}
        \label{fig:phase-k2}
    \end{minipage}
    \begin{minipage}[b]{0.47\linewidth}
        \centering
        \includegraphics[keepaspectratio, scale=0.45]{./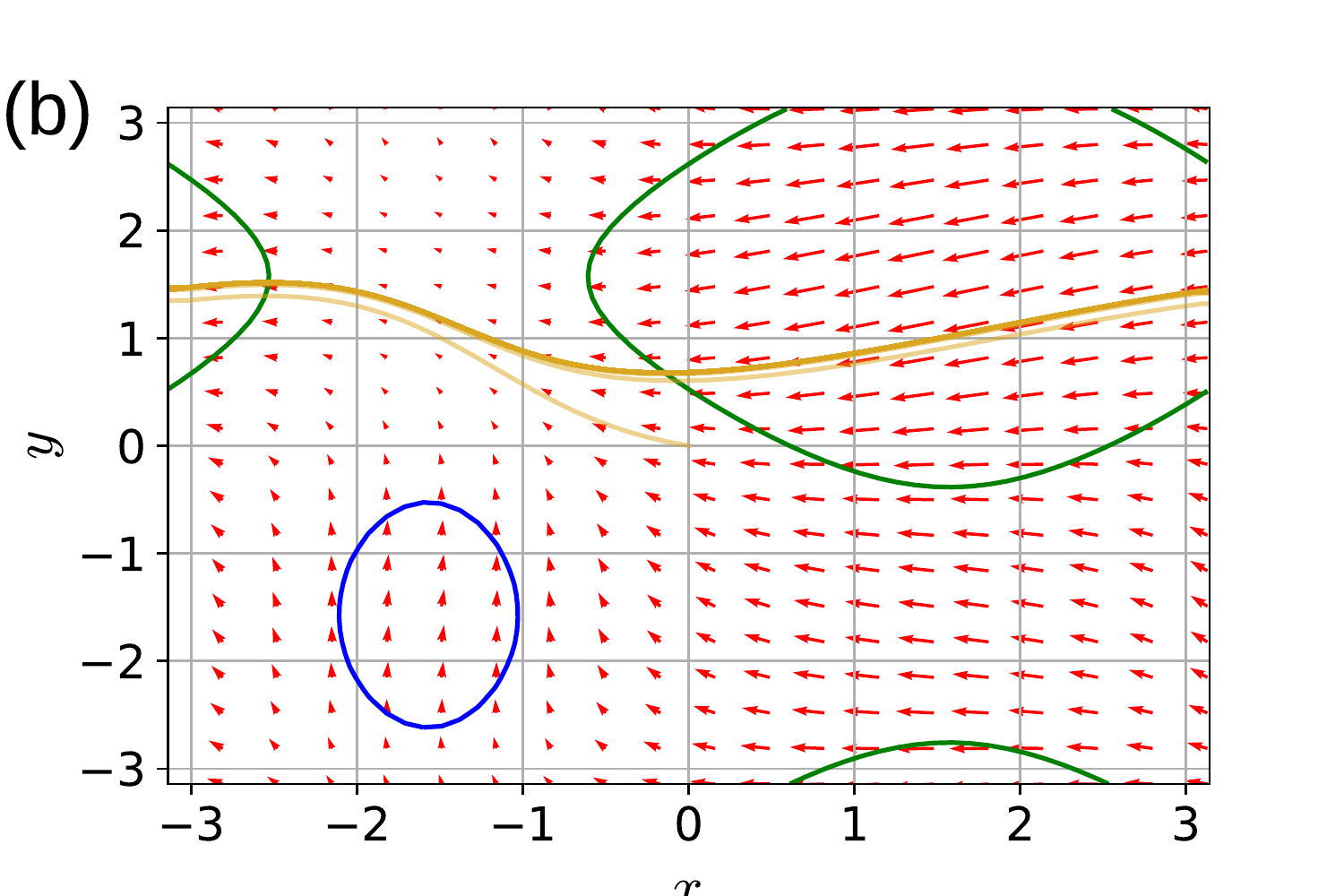}
        \refstepcounter{subfigure}
        \label{fig:phase-k10}
    \end{minipage}\\
    \begin{minipage}[b]{0.47\linewidth}
        \centering
        \includegraphics[keepaspectratio, scale=0.45]{./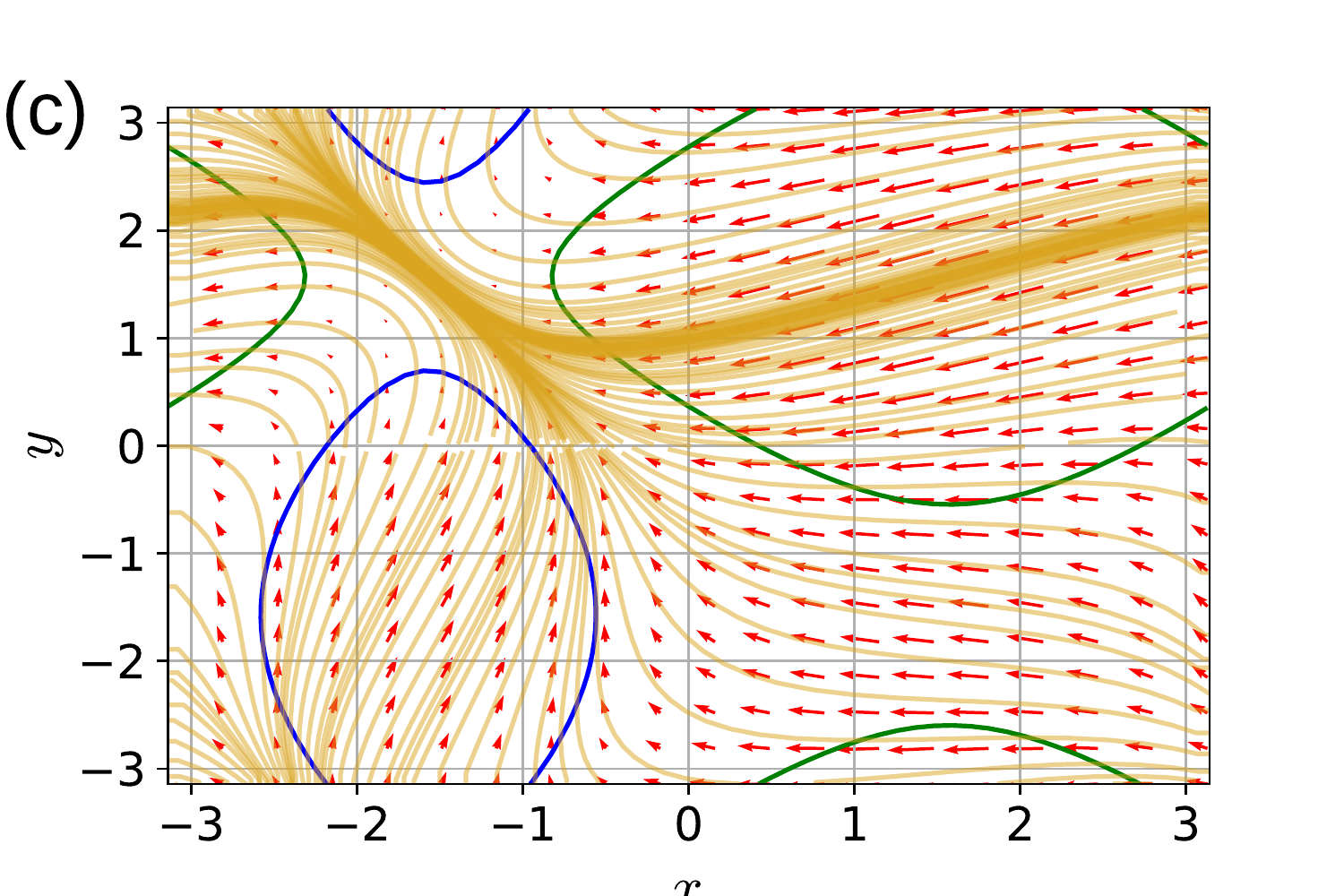}
        \refstepcounter{subfigure}
        \label{fig:phase-k22}
    \end{minipage}
    \begin{minipage}[b]{0.47\linewidth}
        \centering
        \includegraphics[keepaspectratio, scale=0.45]{./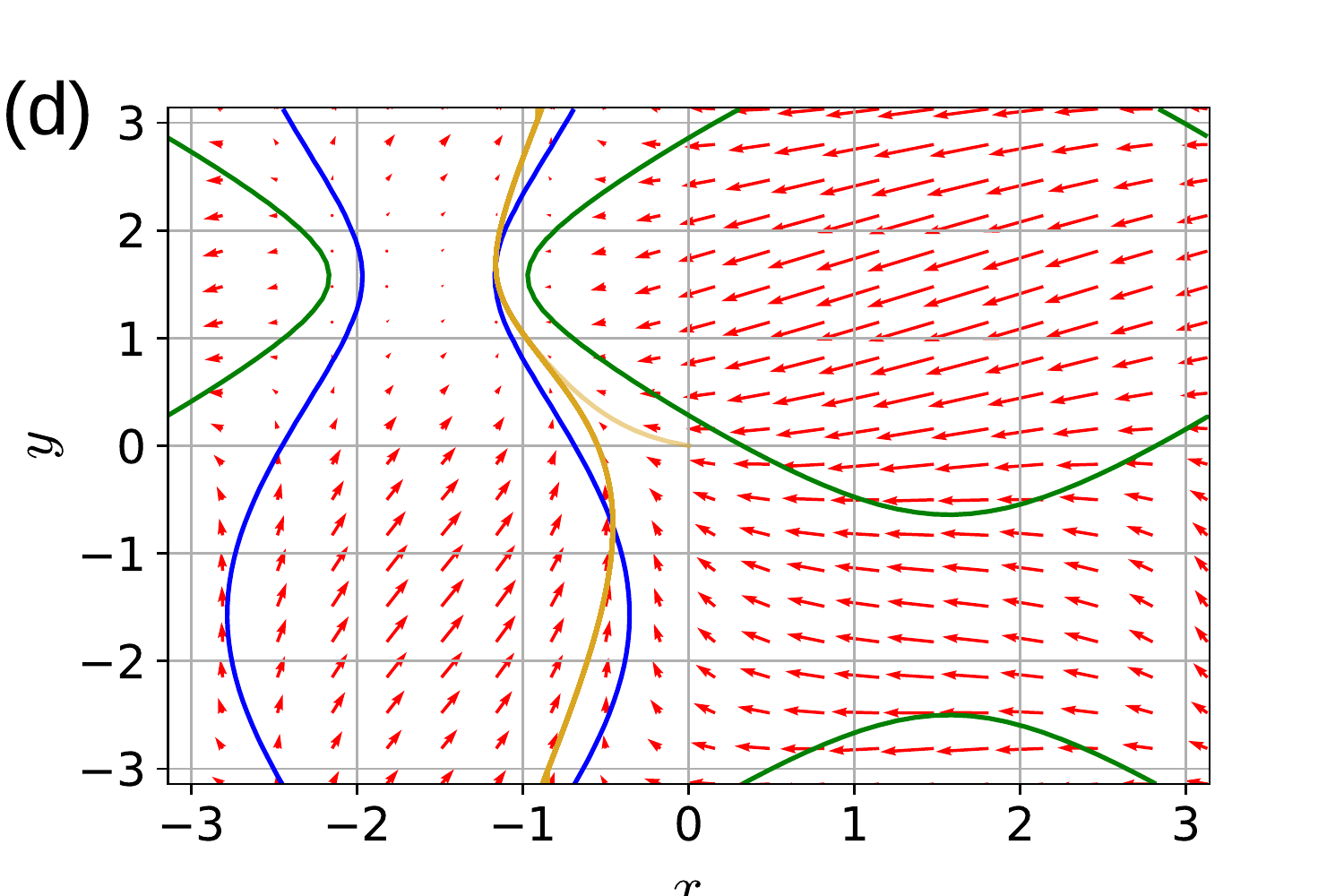}
        \refstepcounter{subfigure}
        \label{fig:phase-k50}
    \end{minipage}\\
    \begin{minipage}[b]{0.47\linewidth}
        \centering
        \includegraphics[keepaspectratio, scale=0.45]{./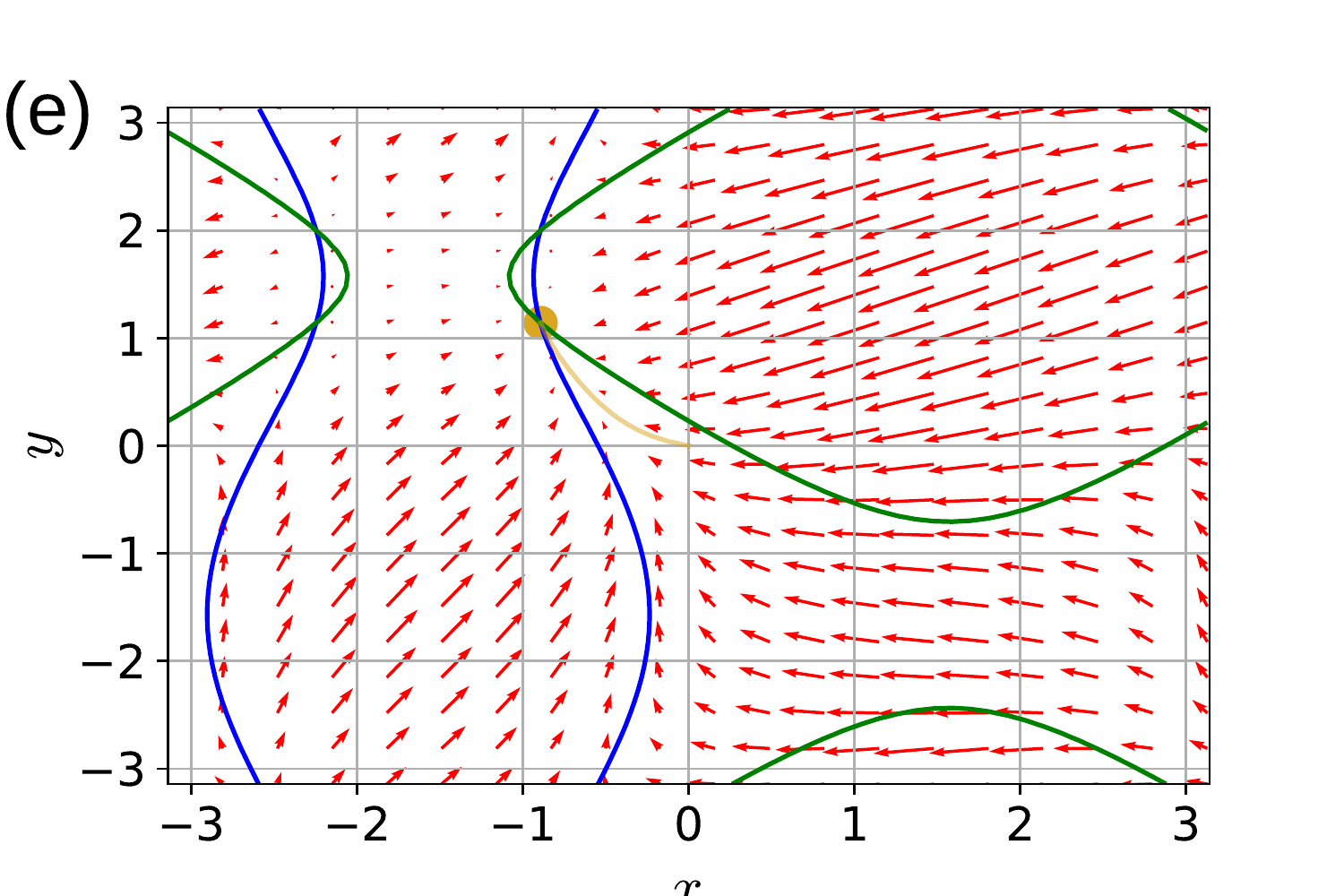}
        \refstepcounter{subfigure}
        \label{fig:phase-k80}
    \end{minipage}
    \begin{minipage}[b]{0.47\linewidth}
        \centering
        \includegraphics[keepaspectratio, scale=0.455]{./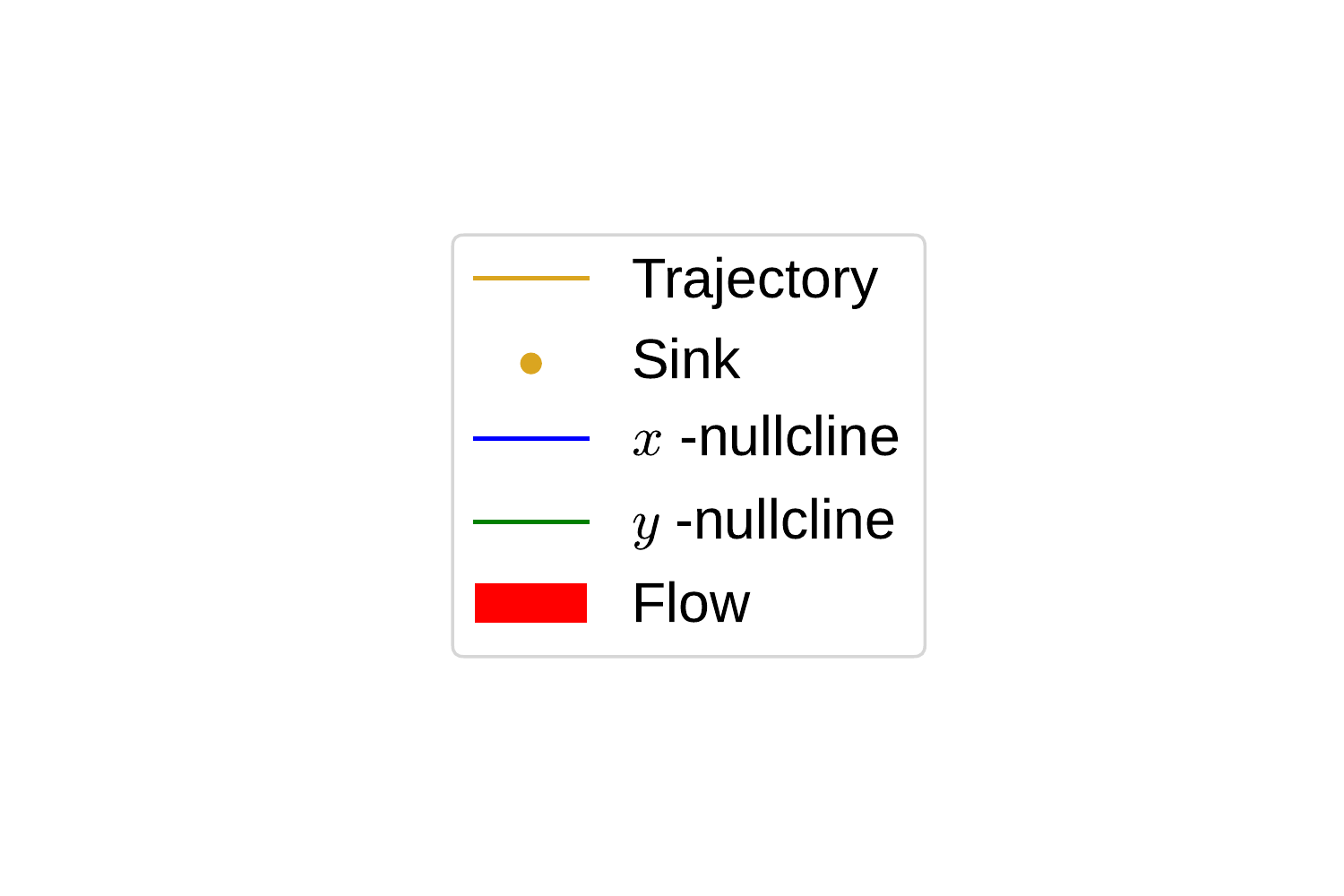}
        \refstepcounter{subfigure}
    \end{minipage}
    \caption{
      Typical trajectories in the phase plain in Eq.~\eqref{eq:phase-diff} for different $K$ and fixed $a=2.0$, $\nu_x=2$, and $\nu_y=-0.05$. 
      Each panel also includes the vector field, $x$-nullcline, $y$-nullcline, and sink.  
      (a) $K=0.15$. (b) $K=0.5$. (c) $K=0.7$. (d) $K=0.9$. (e) $K=1.1$.
      We observe \{$\A, \B, \C$\} in (a) and (c). Each corresponds to quasi-periodic solutions homeomorphic to $\mathbb{T}^2$.
      In (b) and (d), we observe \{$\A, \B\C$\} and  \{$\A\B, \C$\}, respectively, and each corresponds to stable limit cycles homeomorphic to $\mathbb{S}^1$.
      In (e), we observe \{$\A\B\C$\}, and this corresponds to a stable equilibrium point, or a sink. 
    }
    \label{fig:phase}
\end{figure}

\section{Bifurcation analysis of the three-oscillator model}
\label{sec:3body-critical}
To understand the mechanism of the formation and switching of community herein,
we perform a bifurcation analysis and derive the critical coupling strengths
to make the phase diagram shown in Fig.~\ref{fig:3body-phase}.

As shown below, the $K_4$ derivation is straightforward because it is associated with the instability of the fixed point that corresponds to the full synchrony in the reduced model \eqref{eq:phase-diff}.
In contrast, the other critical coupling strengths is difficult to obtain because they are associated with the transitions between (quasi)periodic solutions. 
As suggested in Fig.~\ref{fig:3body-typical}, the transition behavior qualitatively remains even when $\nu_y$ is very small compared to $\nu_x$. Therefore, for analytical tractability, we assume that
\begin{equation}
  \eps = \frac{\nu_y}{\nu_x}
\end{equation}
is a small parameter.
Hence, we employ a perturbative approach to obtain $K_1$ and $K_2$.
Accordingly, in Fig.~\ref{fig:3body-phase}, we assume that $\nu_x=2.0$ and $\nu_y=0.05$, which is the same value used in Fig.~\ref{fig:3body-typical}.

\subsection{Summary of the transition points and the phase diagram}
\label{sec:3body-summary}
We will show that the critical coupling strengths
$K_1=\nu_x k_1, K_2=\nu_x k_2$, $K_3$, and $K_4$
are given by
\begin{subequations}
  \label{eq:3body-matome}
  \begin{align}
    k_1 = \frac{|\eps|}{1+s} +O(\eps^2), \label{eq:K1}\\
    k_2 = \frac{2(1+s)}{a^2(1+r)} + \frac{\eps}{1+s}
    +O\left( \eps^2, \eps k_2, k_2^3 \right), \label{eq:K2}\\
    K_3^\mathrm{lower} < K_3  < K_3^\mathrm{upper}, \label{eq:K3}\\
    K_4=\max\left(K_4^{(1)}, K_4^{(2)}\right), \label{eq:K4}
  \end{align}
\end{subequations}
where
\begin{subequations}
  \begin{align}
   K_3^\mathrm{lower} &= \frac{\nu_x}{(1+r)a+1},\\
   K_3^\mathrm{upper} &= \frac{\nu_x}{(1+r)a-1},\\
   K_4^{(1)}&=\frac{(1+s)\nu_x-\nu_y}{a(r+s+rs)},\\
   K_4^{(2)}&=\frac{|\nu_x-(1+r)\nu_y|}{r+s+rs}.
  \end{align}
\end{subequations}
For convenience, we regard $K_i$ as a function of $a$ (i.e., $K_i(a)$) by assuming that $\nu_x$, $\nu_y$, $r$ and $s$ are fixed.

\begin{figure}[tbp]
  \centering
  \includegraphics[width=135mm]{./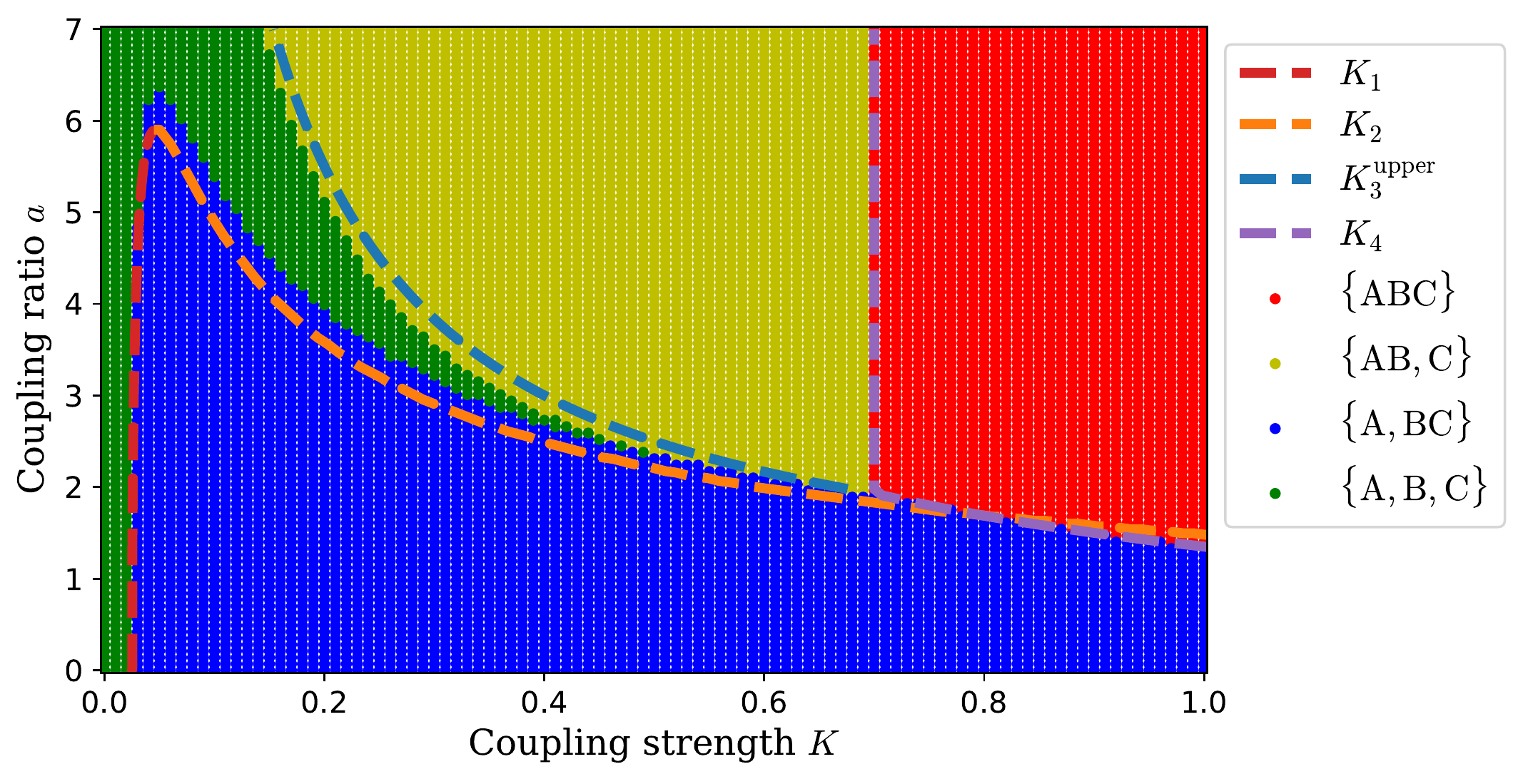}
  \centering
  \caption{Phase diagram of the synchronization patterns for $\nu_x=-2,\ \nu_y=0.05,\ r=1,\ s=1$ in the $K$-$a$ plane. 
    Each grid point shows one of the following four patterns: $\{\mathrm{ABC}\},\ \{\mathrm{AB,C}\},\ \{\mathrm{A,BC}\}$ and $\{\mathrm{A,B,C}\}$.
    The dashed curves denote the approximate critical coupling strengths.
    Whereas $K_3^\mathrm{upper}$ and $K_4$ are those given in Eqs.~\eqref{eq:3body-matome}, 
    $K_1=\nu_x k_1$ and $K_2=\nu_x k_2$ are higher-order approximations than
    Eqs.~\eqref{eq:3body-matome}, as detailed in Sec.~\ref{sec:pertu}.
  }
  \label{fig:3body-phase}
\end{figure}

Figure~\ref{fig:3body-phase} displays the phase diagram in the $K$-$a$ plane.
The phase boundaries are approximated using $K_i(a)$, where we use $K_3=K_3^\mathrm{upper}$.
In this, the $K_i$ curves are drawn only for a range of $a$ in which $K_1 \le K_2 \le K_3 \le K_4$ holds true.

From the phase diagram, we can identify the $a$ values at which the transition behaviors change. Let $(K^\dagger,a^\dagger)$ be the point where $K_3$ and $K_4$ meet and $(\hat K, \hat a)$ be the point where $K_1$ and $K_2$ meet. 
To approximately find $a^\dagger$, we solve $K_3^\mathrm{upper}=K^4$ in terms of $a$ to obtain
\begin{equation}
  a^\dagger \approx \frac{(1+s)\nu_x-\nu_y}{|\nu_x-(1+r)\nu_y|}.
\end{equation}
As detailed in Sec.~\ref{sec:pertu}, we also obtain
\begin{equation}
  \hat a \approx \frac{1+s}{\sqrt{-2(1+r)\eps}}.
\end{equation}
Therefore, the transition behaviors shown in Figs.~\ref{fig:3body-abs}(a--c) are observed approximately for $a<\hat a$, $\hat a<a<a^\dagger$, and $a>a^\dagger$, respectively.

We expect that $K_2$ also meets $K_4$ at point $(K^\dagger,a^\dagger)$ because the transition between patterns $\{\A, \B\C\}$ and $\{\A\B, \C\}$ is possible only via pattern $\{\A, \B, \C\}$, as geometrically argued in Sec.~\ref{sec:k3}. 
We also observe that for $\nu_x\geq (1+r)\nu_y$, $K_3^\mathrm{upper}$ meets $K_4$ at the point where $K_4^{(1)}$ and $K_4^{(2)}$ meet. We conjecture that $(K^\dagger,a^\dagger)$ is actually given by the point where $K_4^{(1)}$ and $K_4^{(2)}$ meet.
The simulation result shown in Fig.~\ref{fig:3body-phase} supports this conjecture.

In addition, we provide qualitative implications of these theoretical results.
The following descriptions are based on the expressions for the critical coupling strengths given in Eq.~\eqref{eq:3body-matome} and the correspondence between our three-oscillator model and the oscillator networks illustrated in Fig.~\ref{fig:model}.
With a sufficiently weak coupling, as described in $k_1$, group B is either independent or synchronized to group C, which is the group with a close natural frequency.
For a stronger coupling, as described in $k_2$ and $K_3$, group B may be desynchronized with group C and instead synchronized to group A, which is the group with a distant natural frequency.
This is more likely to occur when $a$ and/or $r$ is larger, that is, the connection between groups A and B is strong, and/or group A is small.

These implications may be further reworded in the following manner:
With a sufficiently weak coupling, the intrinsic dynamical property (i.e., the configuration of the natural frequencies in our models) has a strong effect on the community formation.
In contrast, for a strong coupling, the static property of the network becomes dominant.
Therefore, in the presence of a discrepancy between the dynamical and static properties, community switching generally occurs when the overall coupling strength is varied. 

\subsection{Perturbative approach for obtaining $K_1$ and $K_2$}
\label{sec:pertu}
We analyze transitions $\{\A,\B,\C\}$ $\to$ $\{\A,\B\C\}$ $\to$ $\{\A,\B,\C\}$ arising at $K=K_1$ and $K_2$.
As illustrated in Fig.~\ref{fig:3body-typical}, these transitions occur for small $K$ when $\nu_y$ is small.
Thus, We take a perturbative approach by assuming that  
\begin{subequations}
  \begin{align}
    \eps=\frac{\nu_y}{\nu_x} \ll 1,\\
   k=\frac{K}{\nu_x} \ll 1.
  \end{align}
\end{subequations}
In this case, there is a time-scale separation in Eq.~\eqref{eq:phase-diff}.
In other words, $\hat x = x-\nu_x t$ and $y$ are slow variables, and 
$\sin x=\sin(\hat x + \nu_x t)$ describes fast oscillations.
Thus, One can expect that the $\sin x$ terms in Eq.~\eqref{eq:phase-diff} are effectively averaged out, and both $\dot {\hat x}$ and $\dot y$ are approximately described as follows:
\begin{subequations} 
  \label{dotxy_approx}
  \begin{align}
  \dot{\hat x} &= -K\left\{ (1+r)a \sin (\hat x + \nu_x t) + \sin y \right\}\nonumber \\
  & \approx -K \sin y,\\
  \dot y &= \nu_y - K\left\{ \sin (\hat x + \nu_x t) + (1+s)a \sin y \right\} \nonumber\\
  & \approx \nu_y -2aK \sin y. \label{doty_appro}
  \end{align}
\end{subequations}
In this approximated system, the synchronization of oscillators B and C 
occurs if Eq.~\eqref{doty_appro} has a fixed point $y=y^*$( i.e., $\nu_y - (1+s)a K \sin y^*=0$).
This is the case when $K \le \frac{|\nu_y|}{(1+r)a}$ and thus we may expect $K_1 \approx \frac{|\nu_y|}{(1+r)a}$;
however, this approximation does not reproduce the second transition.
To rigorously treat such an averaging approximation up to an arbitrary order,
we apply a near-identity transformation to Eq.~\eqref{eq:phase-diff} \cite{Kuramoto2019}.
It will turn out that the above treatment actually corresponds to the lowest-order approximation, and higher order correction terms reproduce the second transition, that is, $K_2$ will be found.

First, introducing nondimensional time $\tau\coloneqq|\Omega| t$ and
\begin{subequations} 
  \label{eq:QS}
  \begin{align}
  Q(x,y)&\coloneqq -(1+r)a\sin x-\sin y, \label{eq:Q}\\
  S(x,y)&\coloneqq -a\sin x-(1+s)\sin y, \label{eq:S}
  \end{align}
\end{subequations}
we rewrite Eq.~\eqref{eq:phase-diff} as
\begin{subequations} 
  \label{eq:xy-timescaled}
  \begin{align}
  x'&=1 +k Q(x,y),\\
  y'&=\varepsilon+k S(x,y),
  \end{align}
\end{subequations}
where $x'=\frac{\mathit{d}}{\mathit{d}\tau} x(\tau)$ and $y'=\frac{\mathit{d}}{\mathit{d}\tau} y(\tau)$.
Here, we employ the following near-identity transformation $(x,y) \to (p,q)$:
\begin{subequations} 
  \label{eq:NIT}
  \begin{align}
    x &= p+\sum_{i=1}^\infty k^i g_i(p,q),\\
    y &= q+\sum_{i=1}^\infty k^i h_i(p,q),
  \label{eq:pertu-ytilde}
  \end{align}
\end{subequations}
where $g_i$ and $h_i$ are periodic functions in $p$ and $q$; i.e., $g_i(p+2 m \pi,q + 2 n \pi)=g_i(p,q)$ and $h_i(p+2 m \pi,q + 2 n \pi)=h_i(p,q)$ for $n,m \in \mathbb Z$.
Without loss of generality, we assume
\begin{subequations} 
  \begin{align}
    g_i(0,\cdot)&=0, \\
    h_i(0,\cdot)&=0. 
  \end{align}
\end{subequations}
The new variables $(p,q)$ well approximate $(x,y)$ for small $k$.

We aim to seek the functions $g_i$ and $h_i$ that transform Eq.~\eqref{eq:xy-timescaled} into 
\begin{subequations} 
  \label{eq:diffpq-expand}
  \begin{align}
    p'&=1+\sum_{i=1}^\infty k^i G_i(q)\label{eq:diffp-expand},\\    
    q'&=\varepsilon+\sum_{i=1}^\infty k^i H_i(q)\label{eq:diffq-expand},    
  \end{align}
\end{subequations}
where we assume that the right-hand equations are free from $p$ similar to Eq.~\eqref{dotxy_approx}.
Thus, the synchronization of oscillators $\B$ and $\C$ corresponds to a fixed point in Eq.~\eqref{eq:diffq-expand}.
As detained in Sec.~\ref{appendix:k1k2}, we obtain 
\begin{subequations} 
  \label{pq-dash}
  \begin{align}
    p'&=1-k\sin q +k^2\left(a\cos q -\frac{\left( 1+r \right)^2a^2}{2}\right) +O(\eps k,k^3)\label{p-dash}\\
    q'&=
    \varepsilon-k(1+s)\sin q-k^2\left(\frac{1+r}{2}a^2-(1+s)a\cos q\right)\notag\\
    &+k^3\left( \frac{(1+r)^2}{4} a^3-\frac{a^2}{4}\cos q-\frac{1+s}{2}a\sin^2 q-(1+s)a\sin q\cos q \right)+O(\eps k,k^4) \label{q-dash}
  \end{align}
\end{subequations}
The synchronization pattern $\{\A,\B\C\}$ is realized when there is a solution in $q'=0$. 
In Fig.~\ref{fig:3body-phase}, we draw the $K_1$ and $K_2$ curves by numerically identifying the range of $k$ such that a solution in $q'=0$ exists.
Although we have derived the flow of $q'$ up to $O(k^3)$ to precisely calculate the critical coupling strengths, we have the drawback of possibly failing to obtain the explicit expression for the critical coupling strengths. Therefore, for simplicity, we use Eq.~\eqref{q-dash} up to $O(k^2)$ and rewrite
\begin{equation}
  q' = \varepsilon +  (1+s) \cos(q+C)k  - \frac{(1+r)a^2}{2} k^2 + O(\eps k, k^3). 
  \label{q-dash2}
\end{equation}
where $C$ is a constant.
Note that we use $\sqrt{1 + (ak)^2}=1+O(k^2)$.
Now, solving $q'=0$ and using $|\cos (q+C)|\le 1$, we obtain the condition for the existence of \{A,BC\} as follows:
\begin{align}
  k_1 \leq k \leq k_2,
\end{align}
where $k_1$ and $k_2$ are given in Eq.~\eqref{eq:3body-matome}.

Both transitions occurring at $K_1$ and $K_2$ correspond to the saddle-node bifurcation of a pair of fixed points and of limit cycles in the systems described by $(p,q)$ and $(x,y)$, respectively.

\subsection{Existence and stability analysis of full synchrony for obtaining $K_4$}
\label{sec:k4}
Full synchrony $\{\A\B\C\}$ is realized when there is a stable fixed point in Eq.~\eqref{eq:phase-diff}. 
The fixed points
\begin{equation*}
    (x,y)=(x_1,y_1),\ (x_1,y_2),\ (x_2,y_1),\ (x_2,y_2),
\end{equation*}
exist if and only if
\begin{equation}
    K\geq K_4,
\end{equation}
where
\begin{subequations} 
  \begin{align}
    -\frac{\pi}{2} \le  x_1 & = \arcsin \frac{(1+s)\nu_x-\nu_y}{(r+s+rs)aK} \le \frac{\pi}{2},\\
    -\frac{\pi}{2} \le  y_1 & = \arcsin \frac{(1+r)\nu_y-\nu_x}{(r+s+rs)K} \le \frac{\pi}{2},\\
    x_2 &= \pi - x_1,\\
    y_2 &= \pi - y_2.
  \end{align}
\end{subequations}
These points correspond to two and four different fixed points for $K=K_4$ and $K>K_4$, respectively.
For $K > K_4$, the four fixed points correspond to a sink, a source, and two saddles as shown in Appendix.~\ref{appendix:k4}.
A full synchrony particularly arises for $K > K_4$ because a sink exists. 

\subsection{Geometric approach for obtaining the upper and lower bounds of $K_3$}
\label{sec:k3}
A synchronization of oscillators A and B occurs when $x$ is bounded.
The sufficient condition for $x$ to be bounded is that there exist such constants $x_1$ and $x_2$ that $\dot x|_{x=x_1}\ge 0$ and $\dot x|_{x=x_2}\le 0$ holds true for all $y$ in Eq.~\eqref{eq:phase-diff-x}
because then, any trajectory may not cross sections $x=x_1$ and $x=x_2$.
Using the facts that $\nu_x>0$, $\sin x_1$ takes the minimum at $x_1=-\frac{\pi}{2}$, and $\max_y \sin y = 1$, we find that $\dot x|_{x=-\frac{\pi}{2}}\ge 0$ for all $y$ when $K\ge K_3^\mathrm{upper}=\frac{\nu_x}{a(1+r)-1}$. Additionally, $\dot x|_{x=\frac{\pi}{2}}\le 0$ also holds true when $K\ge K_3^\mathrm{upper}$. Therefore, the synchronization of oscillators A and B is guaranteed for $K\ge K_3^\mathrm{upper}$.
Moreover, as proven in Sec.~\ref{appendix:k3}, the following proposition holds true:
\begin{proposition}
\label{prop:suff}
The system given by Eq.~\eqref{eq:phase-diff} with $K_3^\mathrm{upper} \leq K < K_4$
has stable and unstable periodic orbits
in regions $D=\left[-\frac{\pi}{2}, \frac{\pi}{2}\right] \times (-\pi,\pi ]\subset\mathbb{T}^2$ and $D^\mathsf{c}$, respectively.
These periodic orbits have zero and one winding numbers with respect to the $x$ and $y$ directions, respectively. 
Any other type of periodic orbit does not exist in the entire phase space $\mathbb{T}^2$. 
\end{proposition}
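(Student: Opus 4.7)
The plan is to combine the positive-invariance geometry already established in the derivation of $K_3^{\mathrm{upper}}$ with the Poincar\'e--Bendixson theorem, applied separately to the annuli $D$ and $D^{\mathrm{c}}$. First I would promote the boundary inequalities $\dot x|_{x=-\pi/2}\ge 0$ and $\dot x|_{x=\pi/2}\le 0$, which hold for every $y$ whenever $K\ge K_3^{\mathrm{upper}}$, into the statement that the compact annulus $D\subset\mathbb{T}^2$ is positively invariant; the boundary $\partial D$ is strictly inflowing with at most isolated tangencies in $y$, which I would check do not allow escape. By the analysis of Sec.~\ref{sec:k4}, the hypothesis $K<K_4$ guarantees that Eq.~\eqref{eq:phase-diff} has no equilibria on $\mathbb{T}^2$, and in particular none in $D$ or $D^{\mathrm{c}}$.

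Existence of a stable periodic orbit in $D$ would then follow from Poincar\'e--Bendixson applied to a smooth flow on a compact two-manifold with inflowing boundary: on the fixed-point-free, positively invariant annulus $D$ the $\omega$-limit set of any point is a periodic orbit, and the $\omega$-limit of a trajectory entering through $\partial D$ provides an orbit $\gamma_s\subset D$ that attracts from the boundary side, giving a stable limit cycle. For the winding classification, $\gamma_s$ is a simple closed curve in the annulus $D$, so its homotopy class in $\pi_1(D)\cong\mathbb{Z}$ is either trivial or a generator; the trivial case is excluded by the Poincar\'e--Hopf index theorem, since a nullhomotopic periodic orbit would bound a disk whose interior must contain a zero of the vector field of index $+1$. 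Because $x$ is confined to $[-\pi/2,\pi/2]$ on $D$, the $x$-winding of $\gamma_s$ is automatically zero, leaving winding $(0,\pm 1)$ as claimed. The complement $D^{\mathrm{c}}$ is itself an annulus on whose boundary the flow is strictly outflowing, so reversing time makes $D^{\mathrm{c}}$ positively invariant and still equilibrium-free; repeating the argument produces a periodic orbit $\gamma_u\subset D^{\mathrm{c}}$ of winding $(0,\pm 1)$ that is attracting for the reversed flow and therefore unstable for the original system.

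To conclude that no other periodic orbit exists on $\mathbb{T}^2$, I would again use the inflowing direction on $\partial D$: no trajectory crosses $\partial D$ from $D$ back into $D^{\mathrm{c}}$, so every periodic orbit lies wholly in $D$ or wholly in $D^{\mathrm{c}}$ (it cannot lie entirely in $\partial D$ itself, which is not flow-invariant), and each such orbit inherits the winding $(0,\pm 1)$ from the preceding argument. The hard part will be the careful use of Poincar\'e--Bendixson on an annulus inside the torus, in particular the dichotomy of periodic orbits on the annulus into nullhomotopic versus essential and the index-theoretic exclusion of the former; handling boundary tangencies in the invariance step and upgrading mere limit-cycle existence to the stability (respectively instability) of a specific orbit are secondary technicalities that I would dispatch by standard modifications of the classical planar argument.
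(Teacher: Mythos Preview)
Your proposal is correct and follows essentially the same route as the paper: positive invariance of the annulus $D$ from the boundary sign conditions on $\dot x$, absence of equilibria for $K<K_4$, Poincar\'e--Bendixson on the fixed-point-free annulus to obtain a periodic orbit, exclusion of the nullhomotopic class by index theory, and time reversal to handle $D^{\mathrm c}$. The only noticeable difference is in the final ``no other type'' step: the paper invokes Poincar\'e--Bendixson once more on all of $\mathbb T^2$ to say every trajectory converges to one of the two orbits, whereas you argue directly that the inflow at $\partial D$ forces any periodic orbit to lie entirely in $D$ or $D^{\mathrm c}$ and then inherit winding $(0,\pm1)$; your argument is arguably cleaner for the stated conclusion, and your explicit flagging of boundary tangencies at $K=K_3^{\mathrm{upper}}$ is a detail the paper glosses over with strict inequalities.
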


Note that along the periodic orbits, $x$ and $y$ are bounded and unbounded, respectively; thus, these periodic orbits correspond to pattern $\{\A\B, \C\}$ instead of $\{\A\B\C\}$.
No fixed point exists for $K<K_4$, and no other type of periodic orbits can be found; hence, the proposition ensures that for $K_3^\mathrm{upper} \leq K < K_4$, pattern $\{\A\B, \C\}$ is realized as $t\to \infty$ for any initial conditions. 

Note also that for the existence of a stable periodic orbit, instead of $D$, we can choose a narrower region $\mathcal{D}$. Let $x_\mathrm{null}(y)$ be a $x$-nullcline given by
\begin{equation}
  x_\mathrm{null}(y) = \arcsin \left(\frac{1}{a(1+r)} \left(  \frac{\nu_x}{K}-\sin y \right)\right).  
\end{equation}
We can then take $x_1$ and $x_2$ to the minimum and maximum of the $x$-nullcline $x_\mathrm{null}(y)$, respectively,
i.e., $\mathcal{D}=\left[x_1, x_2\right] \times (-\pi,\pi ]$,
where $x_1=x_\mathrm{null}(-\frac{\pi}{2})$, $x_2=x_\mathrm{null}(\frac{\pi}{2})$.

Next, we consider the condition for $x$ to be unbounded. Its sufficient condition is $\dot x>0$ for all $x$ and $y$ in Eq.~\eqref{eq:phase-diff-y}. This condition is satisfied for $K<K_3^\mathrm{lower}=\frac{\nu_x}{2a-1}$.
Therefore, the critical coupling strength $K_3$, at which oscillators A and B begin to synchronize, is given in the range $K_3^\mathrm{lower} \le K_3 < K_3^\mathrm{upper}$. 

The numerical simulation in Fig.~\ref{fig:phase}(d) indicates that the system approaches an isolated stable periodic orbit (i.e., a stable limit-cycle). 
Under the assumption that periodic orbits existent for $K_3 < K < K_3^\mathrm{upper}$ are limit cycles, the bifurcation type at $K=K_3$ is determined as follows:
First, the limit cycles have zero winding number along the $x$ direction.
On the one hand, as far as those limit cycles exist, $x$ is bounded because any trajectory may not cross the periodic orbits.
On the other hand, for $K<K_3$, $x$ is unbounded.
This implies that the limit cycles disappear at $K=K_3$. 
No fixed point exists for $K<K_4$; hence, the limit cycles may disappear only via pair annihilation; i.e., the saddle node bifurcation of limit cycles occurs at $K=K_3$.
This also implies that as $K$ decreases, pattern $\{\A,\B,\C\}$ must arise just below $K=K_3$ until $K=K_2$, at which pattern $\{\A,\B\C\}$ arises. 

Yet, whether or not these limit cycles exist for $K>K_4$ remains unclear.
If the emergence of the fixed points occurs on the limit cycles at $K=K_4$ (i.e., a saddle-node on invariant cycle (SNIC) bifurcation occurs), only pattern $\{\A\B\C\}$ arises for $K>K_4$; otherwise, there is a region of bistability for $K>K_4$ in which both patterns $\{\A\B\C\}$ and $\{\A\B,\C\}$ are stable. 
In our numerical simulations, the bistability was not observed 
for $K>K_4$, suggesting that SNIC bifurcation actually occurs at $K=K_4$. 

\section{Application of the three-oscillator model to oscillator networks}
\label{sec:applicationEg}
We herein apply our theoretical results to an example network and demonstrate the unity of our theory.

We consider the network presented in Fig.~\ref{fig:3body-application}.
This network is the same network that used in Fig.~\ref{fig:example}(a) for $l=7$ and Fig.~\ref{fig:example}(b).
\begin{figure}[tbp]
    \centering
    \includegraphics[width=80mm]{./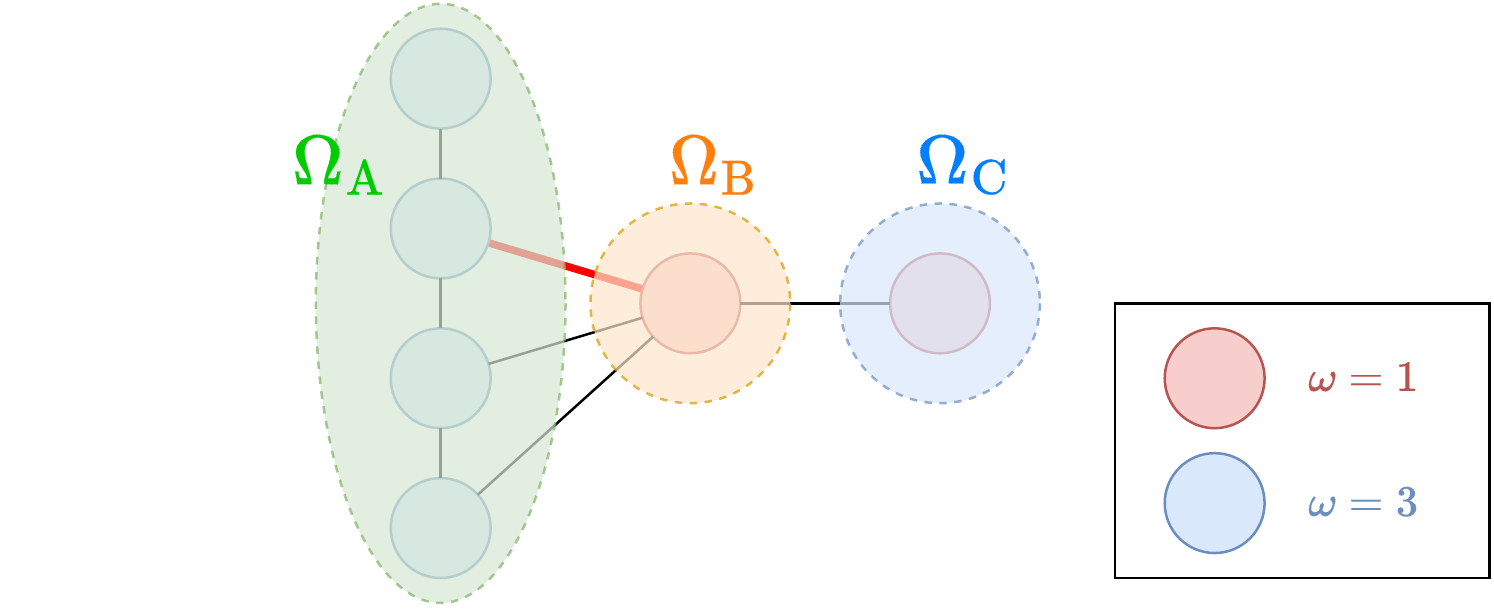}
    \centering
    \caption{Network composed of six oscillators with seven bidirectional edges.
    There are three groups A, B, and C, and their constituent oscillators have the natural frequencies $\omega_\A=3$ in group A and $\omega_\B=\omega_\C=1$ in groups B and C. 
    }
    \label{fig:3body-application}
\end{figure}

In this network, we have $\omega_\A=3, \omega_\B=\omega_\C=1, M_\A=4, M_\B=M_\C=1, m=3$, and $n=1$; thus, the parameters in Eq.~\eqref{eq:phase-diff} are determined as $\nu_x=2, \nu_y=0, a=3,\ r=\frac{1}{4}$, and $s=1$. Note that $\lambda=\frac{M_\B}{n}K=1$ in this case.

Substituting these parameter values into Eq.~\eqref{eq:3body-matome}, 
we obtain the critical coupling strengths as $K_1=0,\ K_2\approx 0.5,\ K_3^\mathrm{upper}\approx 0.7,\ K_4=\frac{4}{3}$.
Consequently, noting the scaling of the coupling strength in Eq.~\eqref{eq:phase-diff}, the network shows four synchronization patterns:
(i) $\{\mathrm{A,BC}\}$ for $0\leq K\lesssim 0.5$,
(ii) $\{\mathrm{A,B,C}\}$ for $0.5\lesssim K\lesssim 0.7$,
(iii) $\{\mathrm{AB,C}\}$ for $0.7\lesssim K\lesssim \frac{4}{3}$,
(iv) $\{\mathrm{ABC}\}$ for $K\geq \frac{4}{3}$.
Thus, our theoretical prediction is in a reasonable agreement with the numerical result presented in Fig.~\ref{fig:changingK_N6}.

Subsequently, to access the effect of network connectivity, we consider the different numbers $m$ of edges between groups A and B. 
We have $m=4$ when all edges between A and B exist, corresponding to $a=4$.
By pruning the edges from the top, we obtain $m=a=3,2,1,0$.
For example, the network shown in Fig.~\ref{fig:3body-application} in the absence of the red edge corresponds to $m=a=2$.
In Fig.~\ref{fig:3body-phase-M4N1}, we draw the critical coupling strengths predicted from the three-oscillator model as the functions of $a$ by dashed lines.
Moreover, we plot by solid lines the synchronization patterns obtained from the numerical simulations of the oscillator network.
The $K_4$ curve well approximates the phase boundary and the $K_3^\mathrm{upper}$ curve also works as an upper boundary for pattern $\{\A,\B,\C\}$.
It should be noted that $K_3^\mathrm{upper}$ and $K_4$ meet at $a=2$, predicting that patterns $\{\A\B, \C\}$ and $\{\A,\B,\C\}$ vanish around $a=2$.
Actually, these patterns, which are present for $a=4$ and $3$, are absent for $a \leq 2$; hence, community switching does not occur $a \leq 2$.
We also observe that the $K_2$ curve roughly approximates the boundary between patterns $\{\A,\B\C\}$ and $\{\A,\B,\C\}$ for $a=3$ and $4$.
However, for $a \leq 2$, the $K_2$ curve is inaccurate, as naturally expected from the fact that $K_2$ is valid only when it is sufficiently small.
This theoretical prediction is also in a reasonable agreement with the numerical result presented in Fig.~\ref{fig:cutting_N6K10}.

\begin{figure}[tbp]
  \centering
  \includegraphics[width=105mm]{./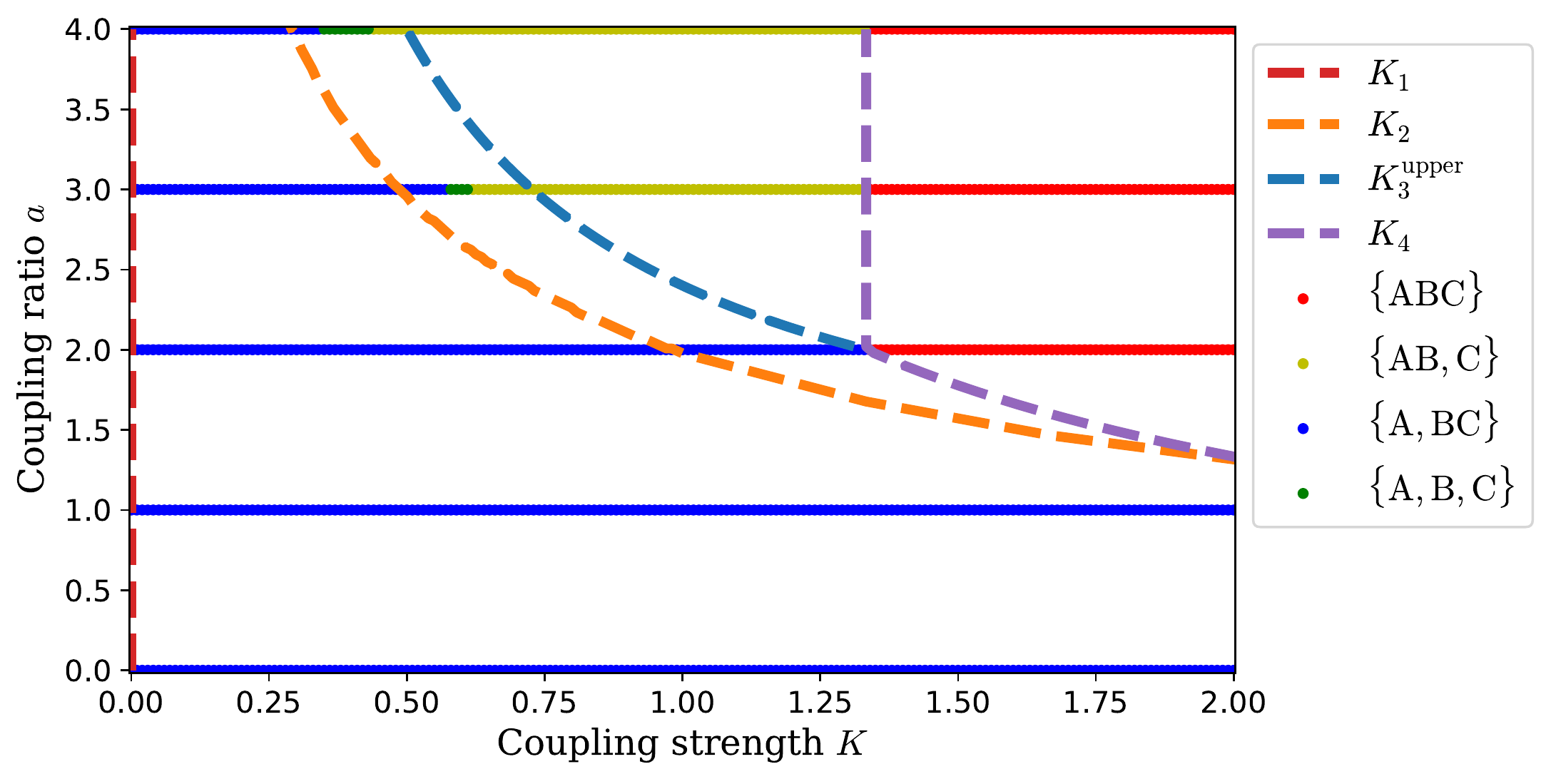}
  \caption{
    Synchronization patterns in the $K$-$a$ plane for fixed $\nu_x=2,\ \nu_y=0,\ r=\frac{1}{4}$ and $s=1$. 
    Each grid point shows one of the following four patterns: \{$\A\B\C$\}, \{$\A\B,\C$\}, \{$\A,\B\C$\} and \{$\A,\B,\C$\}.
    As well as Fig.\ref{fig:3body-phase}, the approximate critical coupling strengths are also drawn by dashed lines.
  }
  \label{fig:3body-phase-M4N1}
\end{figure}

\section{Conclusion and discussion}
\label{sec:concludediscuss}
We analytically and numerically investigated the development of dynamical community structure, where the community is referred to as a group of oscillators synchronized in frequency, in networks of phase oscillators.
We first numerically showed that in the example networks, the transitions between different synchronization patterns occur when the overall coupling strength or the network connectivity is varied. 
In particular, we found that community switching (i.e., a portion of oscillators change the group to which they synchronize) occurs for a range of parameters.
We then proposed a three-oscillator model as a mean-field approximation of a class of oscillator networks, where the connection weights of the three-oscillator models are determined by the original network connectivity. 
Using this three-oscillator model, we analyzed the transitions between different synchronization patterns and obtained approximate expressions for the critical coupling strengths in terms of natural frequencies and connection weights.
In particular, we elucidated that while oscillators with similar natural frequencies tend to synchronize for weak coupling, 
tightly connected oscillators tend to synchronize for strong coupling.
Therefore, if a discrepancy exists between the dynamical and static properties in oscillator networks, community switching generally occurs when the overall coupling strength is varied.
Using an example network, we finally confirmed that the critical coupling strengths theoretically obtained from the three-oscillator model and numerically acquired from the example network are in a rough agreement, demonstrating the utility of our three-oscillator model for predicting the development of communities in a class of complex networks.

We will now comment on the technical contribution of this study.
Partially synchronized patterns correspond to the periodic solutions of the system, the explicit expressions for which are generally difficult to obtain. 
This hinders the theoretical treatment of periodic solutions, including the existence and stability analysis.
However, our systematic perturbative approach based on an assumption on the frequencies allows us to derive a reduced model, in which the partially synchronized state appearing for a weak coupling is described as a fixed point. This renders its existence and stability analysis possible.
Note that the first-order approximation, which can be done even without an elaborate formation, can only explain the first transition where the partially synchronized state emerges.
To clarify the second transition, where the partially synchronized state disappears again, we must go beyond the first-order approximation, which requires formulation using the near-identity transformation.
As clearly seen in Eq.~\eqref{q-dash2}, the second-order term effectively shifts the frequency; thus, the frequency synchronization is violated via a saddle-node bifurcation for a larger coupling $k$. 

Our three-oscillator model is rather general in a sense that 
it describes bidirectional interactions with arbitrary weights between neighboring oscillators, which may have different natural frequencies. 
As argued in the present paper, our three-oscillator model also approximately describes a class of complex networks.
We thus expect its broad applicability, including module networks \cite{71c200abea92433b91489a97449f2bdc} and three-layered oscillatory systems, such as the system of solar-cycle oscillation \cite{Blanter2018}.

\begin{acknowledgments}
  H.K. acknowledges support from JSPS KAKENHI under grant no. 21K12056.
\end{acknowledgments}

\appendix

\section{Networks used in Sec. \ref{sec:method-3body-settting}}
In Fig.~\ref{fig:cutting_N6K10}, the network evolved in the following manner: 
We started from the complete network, where the number $l$ of edges was $l=15$.
We then removed the bidirectional edges one by one in the following order:
$(3, 5)$, $(0, 2)$, $(0, 5)$, $(0, 4)$, $(2, 5)$, $(2, 4)$, $(0, 3)$, $(1, 2)$, $(1, 3)$, $(1, 4)$, $(1, 5)$, $(0, 1)$, $(2, 3)$, $(3, 4)$, and $(4, 5)$,
where $(i,j)$ denotes the bidirectional edge between oscillators $i$ and $j$ $(0 \leq i,j \leq 5)$.
The final network is the empty network $l=0$.

The adjacency matrix used in Fig.~\ref{fig:changingK_N6} is
\begin{equation}
  \label{eq:adjmat}
  A=\begin{pmatrix}
    0&1&0&0&0&0\\
    1&0&0&1&1&1\\
    0&0&0&1&0&0\\
    0&1&1&0&1&0\\
    0&1&0&1&0&1\\
    0&1&0&0&1&0\\
  \end{pmatrix},
\end{equation}
which represents the same network considered in Fig.~\ref{fig:cutting_N6K10} with $l=7$ and the same one in Fig.~\ref{fig:3body-application} in the presence of the edge colored in red.

\section{$m$-$n$ synchronization in three-oscillator model}
\label{appendix:mn-sync}
We observed the non-monotonic dependence of the effective frequencies on the coupling strength in Sec.\ref{sec:3body-synchro-state}.
We hypothesized that this was caused by $m$-$n$ synchronization between $x$ and $y$; i.e.,
$x$ increases by $2m\pi$ while $y$ increases by $2 n \pi$ ($m,n\in\mathbb{Z}$).
To confirm this, as shown in Fig.~\ref{fig:m-n_sync},
we plot the ratio $\zeta(K)$ between the effective frequencies of $x$ and $y$ as
\begin{equation}
    \zeta 
    = \frac{\Omega_1^\mathrm{eff} - \Omega_2^\mathrm{eff}}
    {\Omega_3^\mathrm{eff} - \Omega_2^\mathrm{eff}}.
\end{equation}
Here, we can easily find plateaus in $\zeta(K)=m$, where $m=0,1,\ldots, 6$, indicating that the oscillations in $x$ and $y$ causes $m$-$1$ synchronization.
Moreover, we can also observe other plateaus, where $\zeta(K)=\frac{m}{n}$ ($m,n\in\mathbb{Z}$).
Hence, the curve might be considered as the devil's stair case, which is observed when $m$-$n$ synchronization occurs\cite{pikovsky_rosenblum_kurths_2001}

This behavior is also observed in a similar oscillator network \cite{SUCHORSKY20092434}.

\begin{figure}[tbp]
  \centering
  \includegraphics[width=80mm]{./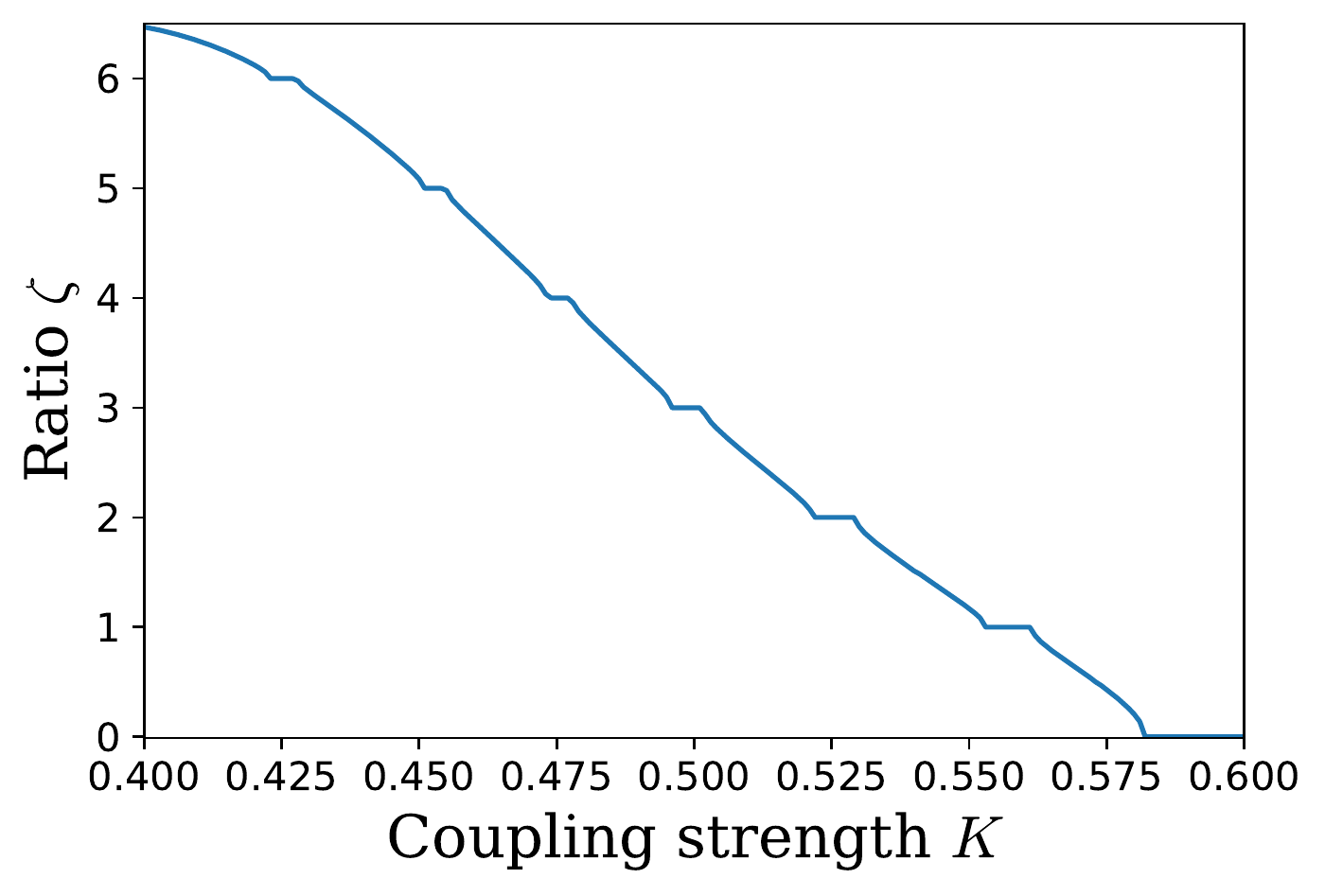}
  \caption{Ratio $\zeta$ of the effective frequencies of $x$ and $y$ numerically observed as a function of $K$ for $a=2$, $\nu_x=2$, $\nu_y=-0.5$ and $r=s=1$, which is the same parameter values as in Fig.~\ref{fig:3body-abs}(c).
  The curve monotonically decreases and has some plateaus, where the ratio is rational.
  }
  \label{fig:m-n_sync}
\end{figure}

\section{Derivation of $K_1$ and $K_2$}
\label{appendix:k1k2}
In this section, we derive the functions $H_i(q),\ G_j(q)$ ($i=1,\ 2$, $j=1,\ 2,\ 3$) in Eqs.~\eqref{eq:diffp-expand} and \eqref{eq:diffq-expand}.

Substituting Eq.~\eqref{eq:NIT} into Eq.~\eqref{eq:QS}, we expand $Q(x,y),\ S(x,y)$ as follows with respect to $p$ and $q$:
\begin{subequations} 
  \begin{align}
    Q(x,y)&=Q_1(p,q)+kQ_2(p,q)+O(k^2),\\
    S(x,y)&=S_1(p,q)+kS_2(p,q)+k^2S_3(p,q)+O(k^3),
  \end{align}
\end{subequations}
where
\begin{align}
  Q_1&=-(1+r)a\sin p-\sin q,\\
  Q_2&=-(1+r)a\cos p\cdot g_1-\cos q\cdot h_1,\\
  S_1&=-a\sin p-(1+s)\sin q,\\
  S_2&=-a\cos p\cdot g_1-(1+s)\cos q\cdot h_1,\\
  S_3&= \frac{a}{2}\sin p\cdot g_2+\frac{1+s}{2}\sin q\cdot h_2.
\end{align}

Differentiating Eq.~\eqref{eq:NIT} with respect to $\tau$,
we obtain
\begin{align}
  x'&=p'+k(p'\partial_p g_1+q'\partial_q g_1)+k^2(p'\partial_p g_2+q'\partial_q g_2)+O(k^3)\notag\\
  &=(1+k\partial_p g_1+k^2\partial_p g_2)p'+(k\pt_q g_1+k^2\pt_q g_2)q'+O(k^3),\label{eq:diff-x-pq}\\
  y'&=(1+k\partial_q h_1+k^2\partial_q h_2+k^3\partial_q h_3)q'+(k\pt_p h_1+k^2\pt_p h_2+k^3\pt_p h_3)p'+O(k^4),\label{eq:diff-y-pq}
\end{align}
where $\ast'\coloneqq\dd{\ast}/\dd{\tau}$, $\partial_p \ast \coloneqq\partial{\ast}/\partial p$, and $\partial_q \ast \coloneqq\partial{\ast}/\partial q$. 
Rearranging Eq.~\eqref{eq:diff-x-pq}, we obtain
\begin{align}
  p'&=(1+k\pt_p g_1+k^2\pt_p g_2)^{-1}\left( x'-(k\pt_q g_1+k^2\pt_q g_2)q' \right)+O(k^3)\notag\\
  &=(1-k\pt_p g_1-k^2\pt_p g_2+ k^2 (\pt_p g_1)^2)\left(\operatorname{sgn}(\Omega)+kQ_1+k^2 Q_2-(k\pt_q g_1+k^2\pt_q g_2)q'\right)+O(k^3).\label{eq:diff-p-order2}
\end{align}
Substituting Eq.~\eqref{eq:diff-p-order2} into Eq.~\eqref{eq:diff-y-pq}, we obtain
\begin{align}
  y'&=(1+k\partial_q h_1+k^2\partial_q h_2+k^3\partial_q h_3)q'+(k\pt_p h_1+k^2\pt_p h_2+k^3\pt_p h_3)(1-k\pt_p g_1-k^2\pt_p g_2+ k^2 (\pt_p g_1)^2)\notag\\
  &\quad\left(\operatorname{sgn}(\Omega)+kQ_1+k^2 Q_2-(k\pt_q g_1+k^2\pt_q g_2)q'\right)+O(k^4)\\
  &=(1+k\partial_q h_1+k^2\partial_q h_2+k^3\partial_q h_3)q'\notag\\
  &+\left(k\pt_p h_1+k^2(-\pt_p h_1\cdot \pt_p g_1+\pt_p h_2)+k^3(\pt_p h_3-\pt_p h_1\cdot\pt_p g_2+\pt_p h_1\cdot (\pt_p g_1)^2-\pt_p h_2\cdot \pt_p g_1\right)\notag\\
  &\quad\left(\operatorname{sgn}(\Omega)+kQ_1-k\pt_q g_1q'+k^2Q_2-k^2\pt_q g_2q'\right)+O(k^4)\\
  &=\left(1+k\partial_q h_1+k^2(\partial_q h_2-\pt_p h_1\cdot \pt_q g_1)+k^3\left(\pt_q h_3+\pt_qg_1\cdot(-\pt_ph_2+\pt_ph_1\cdot\pt_pg_1)-\pt_qg_2\cdot\pt_ph_1\right)\right)q'\notag\\
  &+\left(k\pt_p h_1+k^2(-\pt_p h_1\cdot \pt_p g_1+\pt_p h_2)+k^3(\pt_q h_3-\pt_p h_1\cdot\pt_p g_2-\pt_p h_2\cdot \pt_p g_1+\pt_p h_1\cdot(\pt_p g_1)^2\right)\notag\\
  &\quad\left(\operatorname{sgn}(\Omega)+kQ_1+k^2Q_2\right)+O(k^4).
\end{align}
Thus, we obtain
\begin{align}
  q'&=\left(1+k\partial_q h_1+k^2(\partial_q h_2-\pt_p h_1\cdot \pt_q g_1)+k^3\left(\pt_q h_3+\pt_qg_1\cdot(-\pt_ph_2+\pt_ph_1\cdot\pt_pg_1)-\pt_qg_2\cdot\pt_ph_1\right)\right)^{-1}\notag\\
  &\quad\ \left(y'-\left(k\pt_p h_1+k^2(-\pt_p h_1\cdot \pt_p g_1+\pt_p h_2)+k^3(\pt_p h_3-\pt_p h_1\cdot\pt_p g_2-\pt_p h_2\cdot \pt_p g_1+\pt_p h_1\cdot(\pt_p g_1)^2)\right)\right.\notag\\
  &\quad\left. \left(1+kQ_1+k^2Q_2\right)\right)+O(k^4)\\
  &=\left(1-k\partial_q h_1-k^2(\partial_q h_2-\pt_p h_1\cdot \pt_q g_1-(\pt_q h_1)^2)\right.\notag\\
  &\quad\left(-k^3\left(\pt_q h_3+\pt_qg_1\cdot(-\pt_ph_2+\pt_ph_1\cdot\pt_pg_1)-\pt_qg_2\cdot\pt_ph_1-(\pt_q h_1)^3+2\pt_qh_1\cdot(\pt_qh_2-\pt_ph_1\cdot\pt_qg_1)\right)\right)\notag\\
  &\quad\ \left(\varepsilon +kS_1+k^2S_2+k^3S_3-\left(1+kQ_1+k^2Q_2\right)\right.\notag\\
  &\left.\left(k\pt_p h_1+k^2(-\pt_p h_1\cdot \pt_p g_1+\pt_p h_2)+k^3(\pt_p h_3-\pt_p h_1\cdot\pt_p g_2-\pt_p h_2\cdot \pt_p g_1+\pt_p h_1\cdot(\pt_p g_1)^2)\right)\right)+O(k^4)\\
  &=\left(1-k\partial_q h_1-k^2(\partial_q h_2-\pt_p h_1\cdot \pt_q g_1-(\pt_q h_1)^2)\right.\notag\\
  &\quad\left(-k^3\left(\pt_q h_3+\pt_qg_1\cdot(-\pt_ph_2+\pt_ph_1\cdot\pt_pg_1)-\pt_qg_2\cdot\pt_ph_1-(\pt_q h_1)^3+2\pt_qh_1\cdot(\pt_qh_2-\pt_ph_1\cdot\pt_qg_1)\right)\right)\notag\\
  &\quad\ \left(\varepsilon +k(S_1-\pt_ph_1)+k^2\left\{S_2-Q_1\cdot\pt_ph_1+\pt_ph_1\cdot\pt_pg_1-\pt_ph_2\right\}\right.\notag\\
  &\left.+k^3\left\{S_3-Q_2\cdot \pt_ph_1-Q_1(-\pt_ph_1\cdot \pt_pg_1+\pt_ph_2)-\left(-\pt_ph_1\cdot\pt_pg_2-\pt_ph_2\cdot\pt_pg_1+\pt_p h_1\cdot(\pt_p g_1)^2+\pt_ph_3\right)\right\}\right)\\
  &=\varepsilon+k\left(-\varepsilon\pt_q h_1+S_1-\pt_p h_1\right)+k^2\left\{-\varepsilon \left(\pt_q h_2-\pt_p h_1\cdot \pt_q g_1-(\pt_q h_1)^2\right)-\pt_q h_1\left(S_1-\operatorname{sgn}(\Omega)\cdot\pt_p h_1\right)\right.\notag\\
  &\left.+S_2-\pt_p h_1\cdot Q_1+(\pt_p h_1\cdot \pt_p g_1-\pt_p h_2)\right\}\notag\\
  &+k^3\left[S_3-Q_2\cdot\pt_ph_1-Q_1\cdot\left(-\pt_ph_1\cdot\pt_pg_1+\pt_ph_2\right)\right.\notag\\
  &-(-\pt_ph_1\cdot\pt_pg_2-\pt_ph_2\cdot\pt_pg_1+\pt_p h_1\cdot (\pt_pg_1)^2+\pt_ph_3)\notag\\
  &-\pt_qh_1\cdot\left\{S_2-Q_1\cdot\pt_ph_1-(-\pt_ph_1\cdot\pt_pg_1+\pt_ph_2)\right\}\notag\\
  &-(\pt_qh_2-\pt_ph_1\cdot\pt_qg_1-(\pt_q h_1)^2)(S_1-\pt_ph_1)\notag\\
  &\left.-\varepsilon\left\{\pt_qh_3+\pt_qg_1\cdot(-\pt_ph_2+\pt_ph_1\cdot\pt_pg_1)-\pt_qg_2\cdot\pt_ph_1+2\pt_qh_1\cdot(\pt_qh_2-\pt_ph_1\cdot\pt_qg_1)-(\pt_q h_1)^3\right\}\right]+O(k^4).
  \label{eq:diffq-detail}
\end{align}
Substituting Eq.~\eqref{eq:diffq-detail} into Eq.~\eqref{eq:diff-p-order2}, we obtain
\begin{align}
  p'&=(1+k\pt_p g_1+k^2\pt_p g_2)^{-1}\left\{ 1+kQ_1+k^2Q_2-(k\pt_q g_1+k^2\pt_q g_2)q' \right\}+O(k^3)\notag\\
  &=(1-k\pt_p g_1-k^2\pt_p g_2+k^2(\pt_pg_1)^2)\notag\\
  &\quad\left[ 1+kQ_1+k^2Q_2-(k\pt_q g_1+k^2\pt_q g_2)\left\{\varepsilon+k(S_1-\pt_ph_1-\varepsilon\pt_qh_1)\right\} \right]+O(k^3)\notag\\
  &=(1-k\pt_p g_1-k^2\pt_p g_2+k^2(\pt_pg_1)^2)\notag\\
  &\quad\left[ 1+k(Q_1-\varepsilon\pt_qg_1)+k^2\left\{Q_2-\varepsilon\pt_qg_2-\pt_qg_1(S_1-\pt_ph_1-\varepsilon\pt_qh_1)\right\}\right]+O(k^3)\notag\\
  &=1+k(-\pt_p g_1+Q_1-\varepsilon \pt_q g_1)\notag\\
  &+k^2\left\{Q_2-\varepsilon\pt_qg_2-\pt_qg_1(S_1-\pt_ph_1-\varepsilon\pt_qh_1)-\pt_pg_1(Q_1-\varepsilon\pt_qg_1)\right.\notag\\
  &\left.-\left(\pt_pg_2-(\pt_pg_1)^2\right)\right\}+O(k^3).
  \label{eq:diff-p-order1}
\end{align}

Comparing Eqs.~\eqref{eq:diffq-detail} and \eqref{eq:diffq-expand} for the first order of $k$, we obtain
\begin{equation}
  H_1(q)=S_1(p,q)-\pt_p h_1(p,q)-\varepsilon \pt_q h_1(p,q).
\end{equation}
Integrating both sides from $0$ to $\tau$ $(0\leq \tau \le 2\pi)$ with respect to $p$, we obtain
\begin{align}
  \tau H_1(q)&=\int_0^{\tau}\left(-a\sin p-(1+s)\sin q\right)\dd{p}-h_1(\tau,q)+O(\varepsilon)\notag \\
  &=a(\cos \tau-1)-(1+s)\tau\sin q-h_1(\tau,q)+O(\varepsilon).
  \label{eq-h1-H1}
\end{align}
Substituting $2\pi$ into $\tau$, we obtain
\begin{align}
  H_1(q)&=-(1+s)\sin q+O(\varepsilon).
  \label{eq:H1-approx}
\end{align}
Subsequently, substituting Eq.~\eqref{eq:H1-approx} into Eq.~\eqref{eq-h1-H1} and replacing $\tau$ with $p$, we obtain
\begin{equation}
  h_1(p,q)=a(\cos p-1)+O(\varepsilon).
  \label{eq:h1-approx}
\end{equation}
Furthermore, comparing Eqs.~\eqref{eq:diff-p-order1} and \eqref{eq:diffp-expand} for the first order of $k$, we obtain
\begin{equation}
  G_1(q)=Q_1(p,q)-\pt_p g_1(p,q)-\varepsilon \pt_q g_1(p,q).
\end{equation}
Integrating both sides from $0$ to $\tau$ with respect to $p$, we obtain
\begin{align}
  \tau G_1(q)&=\int_0^{\tau}\left(-(1+r)a\sin p-\sin q\right)\dd{p}-g_1(\tau,q)+O(\varepsilon)\notag \\
  &=(1+r)a(\cos \tau-1)-\tau\sin q-g_1(\tau,q)+O(\varepsilon). 
  \label{eq-g1-G1}
\end{align}
Substituting $2\pi$ into $\tau$, we obtain
\begin{align}
  G_1(q)&=-\sin q+O(\varepsilon).
  \label{eq:G1-approx}
\end{align}
Subsequently, substituting Eq.~\eqref{eq-g1-G1} into Eq.~\eqref{eq:G1-approx}, we obtain
\begin{equation}
  g_1(p,q)=(1+r)a(\cos p-1)+O(\varepsilon).
  \label{eq:g1-approx}
\end{equation}

Next, we compare Eqs.~\eqref{eq:diffq-detail} and \eqref{eq:diffq-expand} for the second order of $k$ and utilize Eqs.\eqref{eq:h1-approx} and \eqref{eq:g1-approx} to obtain
\begin{align}
  \begin{split}
      H_2(q)=&-\varepsilon \left(\pt_q h_2-\pt_p h_1\cdot \pt_q g_1\right)-\pt_q h_1\left(S_1-\operatorname{sgn}(\Omega)\pt_p h_1\right)+S_2-\pt_p h_1\cdot Q_1+\pt_p h_1\cdot \pt_p g_1-\pt_p h_2\\
      =&-a\cos p\cdot (1+r)a(\cos p-1)-(1+s)\cos q\cdot a(\cos p-1)\\
      &-(-a\sin p)\cdot (-(1+r)a\sin p-\sin q)-a\sin p(-(1+r)a\sin p)-\pt_p h_2+O(\varepsilon)\\
      =&-(1+r)a^2\cos^2 p+a\left((1+r)a-(1+s)\cos q\right)\cdot \cos p+(1+s)a\cos q-a\sin q\sin p\notag\\
      &-\pt_p h_2+O(\varepsilon).
  \end{split}
\end{align}
Integrating both sides from $0$ to $\tau$ with respect to $p$, we obtain
\begin{align}
  \label{eq:integral-H2}
  \tau H_2(q)=&-\frac{1+r}{2}a^2\left(\tau+\frac{1}{2}\sin 2\tau\right)+a\left((1+r)a\right.\notag\\
  &\left.-(1+s)\cos q\right)\sin \tau+(1+s)a\tau\cos q+a\sin q(\cos\tau-1)-h_2+O(\varepsilon).
\end{align}
Substituting $2\pi$ into $\tau$, we obtain 
\begin{equation}
  \label{eq:H2-approx}
  H_2(q)=-a\left(\frac{1+r}{2}a-(1+s)\cos q\right)+O(\varepsilon).
\end{equation}
Substituting Eq.~\eqref{eq:H2-approx} into Eq.~\eqref{eq:integral-H2}, we acquire
\begin{equation}
  \label{eq:h2-approx}
  h_2(p,q)=-\frac{1+r}{4}a^2\sin 2p +a\left((1+r)a-(1+s)\cos q\right)\sin p+a\sin q(\cos p-1)+O(\varepsilon)
\end{equation}

Comparing Eqs.~\eqref{eq:diff-p-order1} Eq.~\eqref{eq:diffp-expand} for the second order of $k$, we obtain
\begin{align}
  G_2(q)=&Q_2-\varepsilon\pt_q g_2-\pt_q g_1(S_1-\pt_ph_1-\varepsilon\pt_qh_1)-\pt_pg_1(Q_1-\varepsilon\pt_qg_1)-\left(\pt_pg_2-(\pt_pg_1)^2\right)\notag\\
  =&-(1+r)a\cos p\cdot (1+r)a(\cos p-1)-\cos q\cdot a(\cos p-1)\notag\\
  &+(1+r)a\sin p(-(1+r)a\sin p-\sin q)-\pt_p g_2+4a^2\sin^2 p+O(\varepsilon)\notag\\
  =&-(1+r)^2a^2+\left((1+r)^2a^2-a\cos q\right)\cos p+a\cos q-(1+r)a\sin q\sin p\notag\\
  &+(1+r)^2a^2\sin^2 p+(1+r)^2a^2\left(\frac{\tau}{2}-\frac{1}{4}\sin 2\tau\right)-\pt_pg_2+O(\varepsilon).
\end{align}
Integrating both sides from $0$ to $\tau$ with respect to $p$, we obtain
\begin{align}
  \tau G_2(q)=&-(1+r)^2a^2\tau+\left((1+r)^2a^2-a\cos q\right)\sin\tau+a\tau\cos q+(1+r)a\sin q(\cos\tau-1)\notag\\
  &-g_2+O(\varepsilon\tau). 
  \label{eq-g2-G2}
\end{align}
Substituting $2\pi$ into $\tau$, we obtain
\begin{align}
  G_2(q)&=a\cos q-(1+r)^2a^2+\frac{(1+r)^2a^2}{2}+O(\varepsilon).
  \label{eq:G2-approx}
\end{align}
Subsequently, substituting Eq.~\eqref{eq-g2-G2} into Eq.~\eqref{eq:G2-approx}, we acquire
\begin{equation}
  g_2(p,q)=\left((1+r)^2a^2-a\cos q\right)\sin p+(1+s)a\sin q(\cos p-1)-\frac{(1+r)^2}{4}a^2\sin 2p+O(\varepsilon).
  \label{eq:g2-approx}
\end{equation}

We then obtain the following equation comparing Eqs.~\eqref{eq:diffq-detail} and \eqref{eq:diffq-expand} for the third order of $k$:
\begin{align}
  H_3(q)=&S_3-Q_2\cdot\pt_ph_1-Q_1\cdot\left(-\pt_ph_1\cdot\pt_pg_1+\pt_ph_2\right)\notag\\
  &-(-\pt_ph_1\cdot\pt_pg_2-\pt_ph_2\cdot\pt_pg_1+\pt_ph_1\cdot (\pt_pg_1)^2+\pt_ph_3)\notag\\
  &-\pt_qh_1\cdot\left\{S_2-Q_1\cdot\pt_ph_1-(-\pt_ph_1\cdot\pt_pg_1+\pt_ph_2)\right\}\notag\\
  &-(\pt_qh_2-\pt_ph_1\cdot\pt_qg_1-(\pt_qh_1)^2)(S_1-\pt_ph_1)\notag\\
  &-\varepsilon\left\{\pt_qh_3+\pt_qg_1\cdot(-\pt_ph_2+\pt_ph_1\cdot\pt_pg_1)-\pt_qg_2\cdot\pt_ph_1+2\pt_qh_1\cdot(\pt_qh_2-\pt_ph_1\cdot\pt_qg_1)+(\pt_qh_1)^3\right\}\\
  &=\frac{a}{2}\sin p\cdot \left\{\left((1+r)^2a^2-a\cos q\right)\sin p+(1+r)a\sin q(\cos p-1)\right\}\notag\\
  &+\frac{1+s}{2}\sin q\cdot\left\{-\frac{(1+r)}{4}a^2\sin 2p +a\left((1+r)a-(1+s)\cos q\right)\sin p+a\sin q(\cos p-1)\right\}\notag\\
  &+a\sin p\left\{-(1+r)a\cos p\cdot (1+r)a(\cos p-1)-\cos q\cdot a(\cos p-1)\right\}\notag\\
  &+\left((1+r)a\sin p+\sin q\right)\left\{a\sin p\cdot(-(1+r)a\sin p)-\frac{1+r}{2}a^2\cos 2p\right.\notag\\
  &\left.+a\left((1+r)a-(1+s)\cos q\right)\cos p-a\sin q\sin p\right\}\notag\\
  &-\left[a\sin p\left\{((1+r)^2a^2-a\cos q)\cos p-(1+r)a\sin q\sin p-\frac{(1+r)^2}{2}a^2\cos 2p-(1+r)^2a^2\sin^2p\right\}\right.\notag\\
  &\left.+(1+r)a\sin p\left\{-\frac{1+r}{2}a^2\cos 2p +a\left((1+r)a-(1+s)\cos q\right)\cos p-a\sin q\sin p\right\}+\pt_p h_3\right]\notag\\
  &-\left\{(1+s)a\sin q\sin p+a\cos q(\cos p-1)\right\}(-(1+s)\sin q)+O(\varepsilon)\\
  &=\frac{a^2}{2}\left((1+r)^2a-\cos q\right)\sin^2 p+\frac{1+r}{2}a^2\sin q\sin p\cos p-\frac{1+r}{2}a^2\sin q\sin p\notag\\
  &-\frac{(1+r)(1+s)}{8}a^2\sin q\sin 2p+a\sin q\left((1+r)a-(1+s)\cos q\right)\sin p\notag\\
  &+\frac{1+s}{2}a\sin^2 q\cos p-\frac{1+s}{2}a\sin^2 q-(1+r)^2a^3\sin^3 p-\frac{(1+r)^2}{2}a^3\cos 2p\sin p\notag\\
  &+\left\{-(1+r)^2a^3\cos^2 p\sin p+(1+r)^2a^3\sin p\cos p-a^2\cos q\sin p\cos p+a^2\cos q\sin p\right\}\notag\\
  &+(1+r)a^2\left((1+r)a-(1+s)\cos q\right)\sin p\cos p-(1+r)a^2\sin q\sin^2 p\notag\\
  &-(1+r)a^2\sin q\sin^2 p-\frac{1+r}{2}a^2\sin q\cos 2p+(1+r)a^2\sin q\sin^2 p-\pt_ph_3+O(\eps)\notag\\
  &+a\left((1+r)a-(1+s)\cos q\right)\sin q\cos p-a\sin^2 q\sin p\notag\\
  &-a^2\left((1+r)^2a-\cos q\right)\sin p\cos p+(1+r)a^2\sin q\sin^2 p\notag\\
  &+\frac{(1+r)^2}{2}a^3\cos 2p\sin p-(1+r)a^2\left((1+r)a-(1+s)\cos q\right)\sin p\cos p\notag\\
  &+(1+s)^2a\sin^2 q\sin p+(1+s)a\sin q\cos q\cos p-(1+s)a\sin q\cos q\notag\\
  &-\frac{(1+r)^2}{2}a^3\sin p\cos 2p-(1+r)^2a^3\sin^3 p.
\end{align}
Integrating both sides from $0$ to $\tau$ of $p$ and substituting $2\pi$ into $\tau$, we obtain

\begin{equation}
  H_3(q)=\frac{a^2}{4}((1+r)^2a-\cos q)-\frac{1+s}{2}a\sin^2 q-(1+s)a\sin q\cos q+O(\varepsilon).
  \label{eq:H3-approx}
\end{equation}

\section{Proof of proposition \ref{prop:suff}}
\label{appendix:k3}
This section states the proof of Proposition~\ref{prop:suff}.

First, we will show the following lemma to show Proposition~\ref{prop:suff}.
\begin{lemma}
    \label{lemma:annulus}
    On the condition of Proposition~\ref{prop:suff},
    there exist stable periodic orbits with zero and one winding number with respect to $x$ and $y$, respectively, and is homeomorphic to $\mathbb{S}^1$ in the cylindrical region $D=[x_1,x_2]\times (-\pi,\pi]\subset \mathbb{T}^2$. 
\end{lemma}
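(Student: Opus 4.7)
The plan is to apply the Poincar\'e--Bendixson theorem on the compact annular region $D\subset\mathbb{T}^2$, combined with a topological index argument to rule out contractible orbits, and then to extract a stable periodic orbit from an $\omega$-limit set.

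First I would verify that $D=[x_1,x_2]\times(-\pi,\pi]$ is a compact forward-invariant region: the previously established inequalities $\dot x|_{x=x_1}\ge 0$ and $\dot x|_{x=x_2}\le 0$ (valid for all $y$ when $K\ge K_3^\mathrm{upper}$) prevent any trajectory from crossing the two $x$-boundaries, while the $y$-direction is the circle $S^1$ and has no boundary. Next, from the analysis of $K_4$ in Sec.~\ref{sec:k4}, for $K<K_4$ the system \eqref{eq:phase-diff} has no equilibria anywhere on $\mathbb{T}^2$, hence none in $D$.

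With $D$ compact, annular, forward-invariant, and fixed-point free, the Poincar\'e--Bendixson theorem for smooth flows on an annulus forces the $\omega$-limit set of every point in $D$ to be a periodic orbit. Such an orbit is a simple closed curve in $D$; if it were contractible in $D$ it would bound a topological disk which, by the Poincar\'e--Hopf index theorem, would have to contain equilibria whose indices sum to $+1$, contradicting the absence of fixed points. Hence the orbit is non-contractible in $D$, so its homology class is $(0,\pm 1)$ in the $(x,y)$ generators of $\mathbb{T}^2$: the $x$-winding is $0$ because the orbit is confined to $[x_1,x_2]$, while the $y$-winding is $\pm 1$ since a simple closed curve in an annulus can wrap around its core at most once.

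Finally, to extract stability I would use the fact that the $\omega$-limit set $\omega(p)\subset D$ of any $p\in D$ is non-empty, compact, invariant and connected, and by the preceding step equals a single periodic orbit $\gamma$. Continuity of the flow then upgrades attraction of $p$ to attraction of an open neighborhood of $\gamma$, yielding the claimed stable periodic orbit. An alternative route is to set up a Poincar\'e return map on a cross-section transverse to the flow (a segment across the annulus) and identify attracting fixed points of this monotone interval map with stable periodic orbits of the flow. The hard part will be carefully justifying the topological ingredients for an annulus embedded in $\mathbb{T}^2$ rather than in $\mathbb{R}^2$, and promoting the one-sided attraction one gets from the $\omega$-limit characterization to full Lyapunov stability of $\gamma$; the Poincar\'e map viewpoint makes this promotion automatic, and I would likely use it to close the argument.
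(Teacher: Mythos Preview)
Your proposal is correct and follows essentially the same route as the paper: forward invariance of $D$ from the sign of $\dot x$ on the two boundary circles, absence of equilibria for $K<K_4$, Poincar\'e--Bendixson on the annulus, and the index argument to exclude contractible closed orbits (the paper phrases the $y$-winding $>1$ case as a self-intersection contradiction, which is equivalent to your ``simple closed curve wraps at most once''). If anything, you are more careful than the paper about the stability step; the paper simply invokes the generalized Poincar\'e--Bendixson theorem for the existence of a stable periodic orbit, whereas your Poincar\'e return map suggestion is the cleaner way to actually justify Lyapunov stability.
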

\begin{proof}
    First, the trajectory out of any point in $D$ stays in $D$.
    If there were to exist a trajectory that goes from inside $D$ to outside $D$, the trajectory would cross the $D$ boundary ($x=x_1,x_2$).
    However, $\dot{x}|_{x=x_1}>0$ and $\dot{x}|_{x=x_2}<0$ (i.e., the flow of $x$ on the boundary head inside $D$), which contradicts the existence of such trajectories.
    
    Second, no equilibrium point exists because of the assumption of Proposition~\ref{prop:suff}, and the cylindrical region $D$ is homeomorphic to an annulus but not to torus $\mathbb{T}^2$.
    Therefore, the generalization of the Poincar\'{e}-Bendixson theorem\cite{Schwartz1963ErrataAG} guarantees the existence of stable periodic orbits homeomorphic to $\mathbb{S}^1$ in $D$.
    
    Finally, we consider the winding numbers of the periodic orbits.
    In the first place, the winding number with respect to $x$ is zero because $D$ is a trapping region, where any trajectory does not go out.
    Next, if the winding number with respect to $y$ were to be zero, then the orbits would be closed in $\mathbb{R}^2$ and thus enclose fixed points by index theory\cite{Strogatz2018}.
    This contradicts the assumption that no fixed point exists.
    By contrast, if the winding number with respect to $y$ were to be more than one, then the orbits would cross themselves; this, however, contradicts the existence and uniqueness theorem\cite{Strogatz2018}.
    Therefore, the winding number with respect to $y$ is one.
\end{proof}

Next, we show the proof of Proposition~\ref{prop:suff}.
\begin{proof}
    Considering the time reversal of Lemma~\ref{lemma:annulus} and Eq.~\eqref{eq:phase-diff}, there exist stable periodic orbits and unstable ones in $D$ and $D^\mathsf{c}$ respectively.
    The trajectory out of any point then converges to the stable periodic orbits or the unstable ones by the generalization of the Poincar\'{e}-Bendixson theorem\cite{Schwartz1963ErrataAG}.
\end{proof}

\section{Stability of fixed points}
\label{appendix:k4}
In this section, we analyze the linear stability of the fixed points in Sec.\ref{sec:k4}.

We can argue the linear stabilities of the fixed points with the Jacobian matrix of Eq.~\eqref{eq:phase-diff} $J$ as follows:
\begin{align}
    J=-K\begin{pmatrix}
        (1+r)a\cos x&\cos y\\
        a\cos x&(1+s)\cos y
    \end{pmatrix}.
\end{align}
The trace and determinant of $J$ satisfy the following equations and inequality:
\begin{align}
    \tr J&=-K((1+r)a\cos x+(1+s)\cos y),\\
    \det J&=(r+s+rs)aK^2\cos x\cos y,\\
    (\tr J)^2-4\det J&=K^2\left((1+r)a\cos x-(1+s)\cos y\right)^2> 0,
\end{align}
Here, $\cos x_1>0,\ \cos x_2<0,\ \cos y_1>0,\ \cos y_2<0$ and $a>0,\ K>0$; thus,
the trace and determinant of $J$ on each fixed point satisfy
\begin{alignat*}{2}
    \left.\det J \right|_{(x,y)=(x_1,y_1)}&>0, &\quad  \left.\tr J \right|_{(x,y)=(x_1,y_1)}&<0,\\
    \left.\det J \right|_{(x,y)=(x_1,y_2)}&<0, & &\\
    \left.\det J \right|_{(x,y)=(x_2,y_1)}&<0, & &\\
    \left.\det J \right|_{(x,y)=(x_2,y_2)}&>0, &\quad  \left.\tr J \right|_{(x,y)=(x_2,y_2)}&>0.
\end{alignat*}
In conclusion, $(x_1,y_1)$ is a sink; $(x_1,y_2),\ (x_2,y_1)$ are saddles; and $(x_2,y_2)$ is a source in terms of linear stability.
Indeed, Fig.~\ref{fig:phase-k80} shows that the fixed points are composed of a sink, two saddles, and a source.


\bibliography{article_20220831.bbl}

\end{document}